  \providecommand\BibTeX{{%
    \normalfont B\kern-0.5em{\scshape i\kern-0.25em b}\kern-0.8em\TeX}}}
\newcommand{\Prob}{\mathrm{P}}
\newcommand{\av}{\ensuremath{\mathbf{a}}}
\newcommand{\A}{  \ensuremath{\mathcal{A}}   }
\newcommand{\Ind}{\mathds{1}}
\newcommand{\tv}{\ensuremath{\mathbf{t}}}
\newcommand{\rme}{\mathrm{e}}
\newcommand{\rmd}{\mathrm{d}}
\newcommand{\rmi}{\mathrm{i}}
\newcommand{\R}{\mathbb{R}}
\newcommand{\E}{\mathrm{E}}
\newcommand{\new}{\textcolor{black}}
\begin{document}


\title[How to  generate  a  fault-resilient network  at a lower cost]{How to  generate  a  fault-resilient network  at a lower cost}

\author{Alexander Mozeika}
\email{alexander.mozeika@status.im}
\orcid{0000-0003-1514-1650}
\author{Mohammad M. Jalalzai}
\orcid{0000-0003-0183-4536}
\email{moh@status.im}
\author{Marcin P. Pawlowski}
\orcid{0000-0002-5145-9220}
\email{marcin@status.im}
\affiliation{%
\institution{Status Research \& Development GmbH}
\streetaddress{Baarerstrasse 10}
  \city{Zug}
  \country{Switzerland}
}

\begin{abstract}
Blockchains facilitate decentralization, security, identity, and data management in cyber-physical systems. However,  consensus protocols used in blockchains are prone to high message and computational complexity costs and are not suitable to be used in IoT. One way to reduce message complexity is to randomly assign network nodes into committees or shards. Keeping committee sizes small is then desirable in order to achieve lower message complexity, but this comes with a penalty of reduced reliability as there is a higher probability that a  large number of faulty nodes will end up in a  committee. 
In this work, we study the problem of estimating a probability of a failure in randomly sharded networks. We provide new results and improve existing bounds on the failure probability. Thus,  our framework also paves the way to reduce committee sizes without reducing reliability.

\end{abstract}



\begin{CCSXML}
<ccs2012>
   <concept>
       <concept_id>10002978.10002986.10002988</concept_id>
       <concept_desc>Security and privacy~Security requirements</concept_desc>
       <concept_significance>100</concept_significance>
       </concept>
   <concept>
       <concept_id>10003033.10003083.10003095</concept_id>
       <concept_desc>Networks~Network reliability</concept_desc>
       <concept_significance>300</concept_significance>
       </concept>
   <concept>
       <concept_id>10010520.10010521.10010537.10010540</concept_id>
       <concept_desc>Computer systems organization~Peer-to-peer architectures</concept_desc>
       <concept_significance>500</concept_significance>
       </concept>
   <concept>
       <concept_id>10010147.10010919.10010172</concept_id>
       <concept_desc>Computing methodologies~Distributed algorithms</concept_desc>
       <concept_significance>100</concept_significance>
       </concept>
 </ccs2012>
\end{CCSXML}

\ccsdesc[100]{Security and privacy~Security requirements}
\ccsdesc[300]{Networks~Network reliability}
\ccsdesc[500]{Computer systems organization~Peer-to-peer architectures}
\ccsdesc[100]{Computing methodologies~Distributed algorithms}

\keywords{Blockchain, Sharding, IoT, Scalability, Reliability ,Network, Nodes}

\received{20 February 2007}
\received[revised]{12 March 2009}
\received[accepted]{5 June 2009}

\maketitle

\section{Introduction}
Large-scale applications of Cyber-Physical Systems require a significant amount of coordination of a plethora of interconnected devices. Managing thousands of devices pushes the technology to its limits, and creates one of the most important and not yet solved problems of scaling the solution to myriads of connected devices. There are many approaches that try to address this problem efficiently, some of them are focus on centralized solutions, such as cloud   \cite{vasconcelos2019cloud,vzarko2014iot,khodadadi2015data,cao2016marsa,li2017iot}, others are introducing novel architectures like mist and fog \cite{vasconcelos2019cloud,10.1145/2831347.2831354,Kang2018PrivacyPreservedPS,el2021leveraging} or leveraging fully decentralized solutions based on blockchain \cite{michelin2018speedychain}. Each of them has its own merits but in this work, we are focusing on blockchains that have been leveraged to provide decentralized, verifiable, trusted, and traceable IoT-based applications \cite{CommitteeIoTBlockchain}. 

Similarly, due to the strict requirements of IoT networks, blockchains are suitable to defend against data manipulation attacks by providing immutability and avoiding  centralized trusted authority for devices to communicate \cite{SurveyConsensusIoTBlockchain}. Therefore, there have been several research works on leveraging blockchains to provide public safety service \cite{BlockchainPublicSafetyIoT}, smart surveillance for smart cities \cite{BlockchainSmartSurviliance, BlockchainVideoQuery}, automotive industry \cite{fraga2019review} and avionics \cite{BlockchainAvionics}. Furthermore, blockchain technology has also been leveraged to provide    privacy and security in IoT, identity and data management for IoT and monetization for IoT \cite{ApplicationsofBlockchainsinIot}.

Blockchain uses distributed consensus mechanism to reach an agreement for a value to be added to the chain. Consensus makes sure every honest node will eventually agree on the same block at the same sequence in the chain.  Consensus is used to achieve State Machine Replication (SMR).
 It makes sure that each participant of the network maintains the same state all the time. Well-constructed blockchains must be fault-resilient in order to survive Byzantine faults \cite{Lamport:1982:BGP:357172.357176} which are one of the most powerful groups of faults in distributed systems.
A Byzantine Fault Tolerant (BFT) system tolerates arbitrary faults (and adversaries). A BFT SMR system provides SMR services in the presence of Byzantine Faults. The upper bound on tolerating Byzantine faults is $f= \frac{n-1}{3}$ \cite{Fischer:1985:IDC:3149.214121}. This means a system can remain  \emph{reliable} as long as the number of Byzantine faults is less than one-third.

BFT-based consensus protocols provide BFT SMR services. These services constitute the core layer of blockchains. 
Due to the high interest in blockchain, its scalability in terms of the number of nodes (participants) has been one of the main research problems \cite{Luu:2016:SSP:2976749.2978389, Fast-HotStuff,jalalzai2020hermes, Proteus1,2018hotstuff, cypherium, Ethereum-Gasper}. One of the limitations of scaling blockchains is the high message complexity and authenticator (cryptographic signatures) complexity.  Processing a higher number of messages and/or cryptographic operations in a consensus protocol makes it unsuitable for IoT use cases. This is due to the reason that IoT devices have limited bandwidth and processing capacity. Generally, the number of messages as well as cryptographic operations increase as the number of nodes in the network increases.
To scale blockchains (to reduce message and authenticator complexity) different methods were developed to assign nodes randomly in committees \cite{jalalzai2020hermes,Luu:2016:SSP:2976749.2978389, Ethereum-Gasper,AlephZero}. The introduction of committees helps to reduce the message and authenticator complexity from $O(N)$  to $O(n)$ with $n \ll N$, where $n$ is the size of the committee and $N$ is the total number of nodes.

However, randomly assigning nodes into committees is not a straightforward task. One does not want to form a committee with a majority of Byzantine nodes. Therefore, it is paramount to estimate the probability of such an event. Achieving tighter bounds on a committee size for a desired failure probability will allow a protocol to have smaller committees. Smaller committees mean lower message complexity and cryptographic operations, making a protocol more suitable for IoT use cases. 
Therefore, in this work, we present  analytical results  of calculating the probability of a failure of a network in an event of network random partition. First, we introduce  a very general probabilistic framework of random partitions of $N$ nodes of different ``colours''  into $K$ committees.  This framework allows us to construct joint probability distributions of  random partitions. Second, we obtain exact expressions  for these probability distributions  and use them to study the probability of failure.   The latter is the probability of  the event that in at least one of the randomly generated committees, the nodes of some colour exceed a fraction $A$ of all nodes. We consider the probability of failure and we obtain  exact expressions,  bounds, and asymptotic estimates. Finally, we apply our analytical  framework to the sharding of blockchains, which is  a special case of random partitions with two colours  \new{labelling Byzantine and honest nodes}, and improve upon known results in this area. In particular,  we  achieve tighter bounds on committee sizes for a given probability of failure, than the previous known results. Moreover, the term Byzantine and adversarial has been used interchangeably in this document.

\subsection{Related work}
The number of adversaries in a single committee  is usually  modeled with the binomial~\cite{Kokoris2018,Tennakoon2022} and hypergeometric~\cite{Zamani2018, Dang2019, Zhang2022} probability distributions.  The probability of failure is estimated by the union bound which multiplies the probability of failure of  a single committee,  which  uses cumulative distribution function of one of these distributions, by the number of committees $K$. The latter suggests that  these univariate distributions are marginals of some joint probability distribution.  Recently it was established that in hypergeometric case  this joint distribution is the multivariate hypergeometric~\cite{Hafid2020}.  The latter  corresponds to  a sampling scenario where nodes for committees are selected, without replacement, from a  population of $N$ nodes where it is assumed that $M$ nodes in this population are adversarial. However, it is not clear what is this joint distribution for the binomial case. We note that the binomial distribution is quite often chosen as an approximation~\cite{Zamani2018} for the hypergeometric and is  expected to be accurate when the  committee size $n$ is small,  the number of nodes  $N$ is large and the number of adversarial nodes  $M$ is also large.  The probability of failure which  uses \emph{full}  joint probability distribution, to the best of our knowledge, has not been studied  but in  one work~\cite{Hafid2020} where  intuitive derivation of the multivariate hypergeometric distribution was provided  and  corresponding probability of failure  was studied by simulations.  The latter showed advantages of using the joint distribution instead of the union bound  but left analytic solution to be an  open problem~\cite{Hafid2020survey}. The main contribution of this work is to provide a solution to this open problem.

\begin{figure}[t]
\setlength{\unitlength}{1mm}
\begin{center}{
\begin{picture}(100,73)
\put(0,0){\includegraphics[height=75\unitlength]{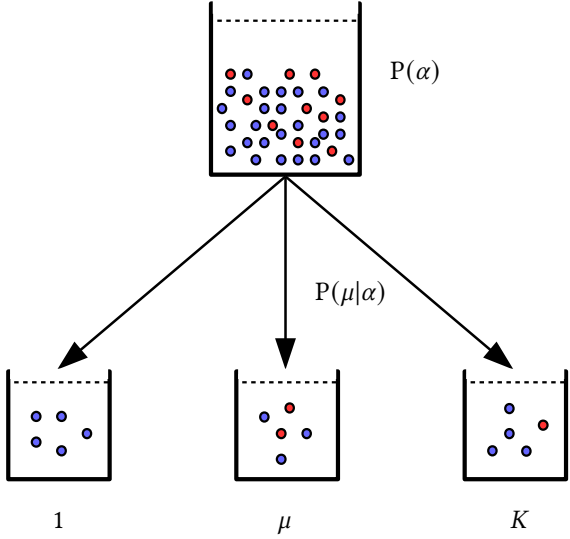}}
\put(57,59){$\Prob(\alpha)$}
\put(47,30){$\Prob(\mu\vert\alpha)$}
\put(12,0){$1$}\put(42,0){$\mu$}\put(73,0){$K$}
\end{picture}
}\end{center}
\caption{The ``urn problem'' associated with different sampling scenarios considered for random partitions. Initially, all $N$ nodes of different colours (here only two colours are shown ) are in one large  committee. The number of nodes of colour $\alpha$ in this committee is  $N\times\Prob(\alpha)$ either on \emph{average} or \emph{exactly}. A node is selected in a random and unbiased way from this large committee  and  is \emph{moved} into one of smaller committees  labeled by  $\mu \in[K]$, where $K<N$. This step is governed by the probability $\Prob(\mu\vert\alpha)$, but possibly subject to \emph{constraints} such as committee sizes and the number of nodes $M=N\times\Prob(\alpha)$ of colour $\alpha$. After $N$ steps \emph{all} nodes will be moved into smaller committees.}
\label{figure:urn-problem} 
\end{figure}
\section{Model of random partitions}
We consider $N$ nodes of $L$ ``colours''  distributed into $K$  shards (or committees).  We assume that partition $\A=(\av_1,\ldots,\av_N)$ with  $\av_i=(a_i(1), a_i(2))$, where\footnote{For $L\in\mathbb{N}$ we use  the definition $[L]=\{1,\ldots,L\}$.}  $a_i(1)\in[L]$ and $a_i(2)\in[K]$, is observed with the probability 
\begin{eqnarray}
\Prob(\A)=\prod_{i=1}^N \Prob(\av_i)  \label{def:P(A)},
\end{eqnarray}
 where $\Prob(\av_i)= \Prob(a_i(1),a_i(2))$
 is the probability that the node $i$ of colour $a_i(1)$ is in the committee  $a_i(2)$. Let us define\footnote{We use the definition   $ \Ind[\mathcal{E}]\in\{0,1\}$ for  the indicator function of some event $\mathcal{E}$ and $\delta_{x;y}$  is  the Kronecker delta function, i.e.  $\delta_{x;y}=\Ind[x=y]$ for   $x,y\in\mathbb{Z}$.} the random variables    $N_{\mu}(\A)= \sum_{i=1}^N\delta_{\mu;\, a_i(2)}$, i.e. the total number of nodes  in committee $\mu$, $N_{\mu}^\alpha(\A)= \sum_{i=1}^N\delta_{\alpha;\, a_i(1)}\delta_{\mu;\, a_i(2)}$, i.e. the number of nodes of colour $\alpha$ in committee $\mu$, and $M= \sum_{\mu=1}^K N_{\mu}^\alpha(\A)$, i.e. the  total number of nodes of colour $\alpha$. We note that $\sum_{\mu=1}^K N_{\mu}(\A)=N$ and $N_{\mu}(\A)= \sum_{\alpha=1}^L N_{\mu}^\alpha(\A)$. The joint probability distribution of these random variables in all of the $K$ committees is given by 
\begin{eqnarray}
\Prob\left(N_{1}^\alpha,\ldots, N_{K}^\alpha; N_{1},\ldots,N_{K};M\right)&=&\sum_{\A}\Prob(\A)\,\delta_{M;\sum_{\mu=1}^K N_{\mu}^\alpha(\A)}\prod_{\mu=1}^K\delta_{N_{\mu}^\alpha;N_{\mu}^\alpha(\A)}\delta_{N_{\mu};N_{\mu}(\A)}\label{def:Prob-joint-master},
\end{eqnarray}
where $\sum_{\mu=1}^KN_{\mu}=N$,  $N_{\mu}^\alpha\leq N_{\mu}$ for all $\mu$ and $M\leq N$. 

From  the probability distribution (\ref{def:Prob-joint-master})  we construct, by  marginalization  and application of Bayes' theorem, the joint distribution 
\begin{eqnarray}
%
\Prob\left(N_{1}^\alpha,\ldots, N_{K}^\alpha; N_{1},\ldots,N_{K}\right)&=&\sum_{\A}\Prob(\A)\prod_{\mu=1}^K\delta_{N_{\mu}^\alpha;N_{\mu}^\alpha(\A)}\delta_{N_{\mu};N_{\mu}(\A)}\label{def:Prob}
\end{eqnarray}
the conditional distribution 
\begin{eqnarray}
\Prob\left(N_{1}^\alpha,\ldots, N_{K}^\alpha\vert N_{1},\ldots,N_{K}\right)
&=&\frac{\Prob\left(N_{1}^\alpha,\ldots, N_{K}^\alpha; N_{1},\ldots,N_{K}\right)}{\Prob\left(N_{1},\ldots,N_{K}\right)}\label{def:Prob-cond} 
\end{eqnarray}
and the conditional distribution
\begin{eqnarray}
\Prob\left(N_{1}^\alpha,\ldots, N_{K}^\alpha\vert N_{1},\ldots,N_{K}; M\right)
&=&\frac{\Prob\left(N_{1}^\alpha,\ldots, N_{K}^\alpha; N_{1},\ldots,N_{K};M\right)}{\Prob\left(N_{1},\ldots,N_{K};M\right)}\label{def:Prob-cond-M}. 
\end{eqnarray}
 These  probability distributions  correspond to three different sampling scenarios  of random partitions (see Figure \ref{figure:urn-problem}).  In the first  scenario, described by (\ref{def:Prob}), each node is assigned randomly to a committee independently from other nodes. The second scenario, described by (\ref{def:Prob-cond}),  corresponds to fixed numbers of (randomly selected) nodes being  assigned  to committees. Finally, the third scenario, described by (\ref{def:Prob-cond-M}), is similar to the second one  but when we have \emph{exactly} $M$ nodes of colour $\alpha$.

\section{Statistical properties of  random partitions\label{section:random-part}}

\subsection{Probability distributions\label{ssection:distr}}
For the   probability distributions (\ref{def:Prob})-(\ref{def:Prob-cond-M}) we obtain the following results.
\begin{lemma}
\label{lemma:joint-distr}
Assuming that partition $\A$ is generated with the probability   (\ref{def:P(A)}) the joint distribution (\ref{def:Prob}) is the product 
\begin{eqnarray}
\Prob\left(N_1,N^\alpha_1,\cdots,N_K,N^\alpha_K \right)
&=&\Prob\left(N_1,\cdots,N_K\vert N \right)\prod_{\mu=1}^K\Prob^\alpha_\mu\left(N^\alpha_\mu\vert N_\mu \right)\label{eq:Prob-joint}
\end{eqnarray}
of the multinomial  
\begin{eqnarray}
\Prob\left(N_1,\cdots,N_K\vert N \right)&=&\Ind\left[\sum_{\mu=1}^KN_\mu=N\right]\frac{N!}{\prod_{\mu=1}^KN_\mu!}\prod_{\mu=1}^K \Prob^{N_\mu}(\mu)\label{def:multinomial-dist}
\end{eqnarray}
and the product  $\prod_{\mu=1}^K\Prob^\alpha_\mu\left(N^\alpha_\mu\vert N_\mu \right)$ of the binomial
\begin{eqnarray}
\Prob^\alpha_\mu\left(N^\alpha_\mu\vert N_\mu \right)&=&{N_\mu\choose N^\alpha_\mu}\,\Prob^{N^\alpha_\mu}(\alpha\vert \mu)  \left[1-\Prob(\alpha\vert \mu)\right]^{N_\mu-N^\alpha_\mu}
\label{def:binomial-distr}
\end{eqnarray}
distributions.
\end{lemma}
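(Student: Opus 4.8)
The plan is to compute the sum over partitions $\A$ in (\ref{def:Prob}) directly, by exploiting the product structure of $\Prob(\A)$ in (\ref{def:P(A)}) together with a standard multinomial-expansion argument. First I would rewrite the constrained sum by inserting integral (or rather, summation) representations of the Kronecker deltas, but actually the cleaner route is combinatorial: the sum $\sum_{\A}$ ranges over all assignments $i \mapsto (a_i(1), a_i(2))$, and the constraints $\delta_{N_\mu; N_\mu(\A)}$ and $\delta_{N_\mu^\alpha; N_\mu^\alpha(\A)}$ merely restrict which assignments contribute. Since $\Prob(\A) = \prod_i \Prob(a_i(1), a_i(2))$ factorizes over nodes and the nodes are exchangeable, the sum collapses to a multinomial-coefficient count times a product of the single-node probabilities raised to the appropriate occupation numbers.

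**The key steps, in order.** (1) Write $\Prob(\av_i) = \Prob(a_i(2))\,\Prob(a_i(1)\vert a_i(2))$, i.e.\ factor each single-node probability into the committee-choice probability $\Prob(\mu)$ and the conditional colour probability $\Prob(\alpha\vert\mu)$; here $\Prob(\mu) = \sum_{\beta=1}^L \Prob(\beta,\mu)$. (2) Observe that an assignment $\A$ consistent with the occupation numbers $\{N_\mu\}$ is specified by choosing which $N_\mu$ nodes go to committee $\mu$ (a multinomial choice, contributing $N!/\prod_\mu N_\mu!$) and then, within committee $\mu$, which $N_\mu^\alpha$ of those are colour $\alpha$ (a binomial choice $\binom{N_\mu}{N_\mu^\alpha}$, since we only track colour $\alpha$ versus ``not $\alpha$''). (3) Collect the probability weight: each of the $N_\mu$ nodes in committee $\mu$ contributes a factor $\Prob(\mu)$, giving $\prod_\mu \Prob^{N_\mu}(\mu)$; each colour-$\alpha$ node in committee $\mu$ contributes $\Prob(\alpha\vert\mu)$ and each non-$\alpha$ node contributes $\sum_{\beta\neq\alpha}\Prob(\beta\vert\mu) = 1 - \Prob(\alpha\vert\mu)$, where the sum over the non-$\alpha$ colours is exactly the place the multinomial theorem is applied to lump all other colours together. (4) Assemble the pieces and recognize the multinomial factor (\ref{def:multinomial-dist}) and the binomials (\ref{def:binomial-distr}), with the overall indicator $\Ind[\sum_\mu N_\mu = N]$ arising because the partition-consistent assignments are empty unless the occupation numbers sum to $N$.

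**The main obstacle** I expect is purely bookkeeping: making the ``choose the nodes'' counting argument rigorous while the colour summation is being carried out, so that the binomial coefficients $\binom{N_\mu}{N_\mu^\alpha}$ and the multinomial coefficient $N!/\prod_\mu N_\mu!$ emerge cleanly without double-counting or missing the constraint $N_\mu^\alpha \le N_\mu$. One clean way to handle this is to first sum (\ref{def:Prob-joint-master})-style over all colours $\beta \neq \alpha$ of the finer occupation numbers $N_\mu^\beta$, which reduces the $L$-colour problem to a two-category (``$\alpha$'' vs.\ ``other'') problem by the multinomial theorem, and only then perform the node-counting. After that reduction the remaining identity is the elementary fact that drawing $N$ i.i.d.\ labels from a two-level hierarchy (first committee, then colour-within-committee) produces exactly the product of a multinomial over committees and conditionally-independent binomials within each committee — which is (\ref{eq:Prob-joint}). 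Finally I would note that no independence assumption beyond (\ref{def:P(A)}) is needed, since (\ref{def:P(A)}) already encodes full node-independence, and that the factorization of $\Prob(\av_i)$ used in step (1) is just the definition of conditional probability, always valid.
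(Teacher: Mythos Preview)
Your proposal is correct and complete: the direct combinatorial counting argument (factor $\Prob(\av_i)=\Prob(\mu)\Prob(\beta\vert\mu)$, count assignments via a multinomial coefficient for committee membership and binomials for colour-$\alpha$ versus not-$\alpha$ within each committee, then lump the $\beta\neq\alpha$ colours together) cleanly yields (\ref{eq:Prob-joint}).

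The paper, however, takes a different route. Rather than counting assignments directly, it computes the moment-generating function
\[
Z[\tv_1,\tv_2]=\sum_{\A}\Prob(\A)\,\rme^{\sum_{\mu}\{N_\mu^\alpha(\A)\,t_1(\mu)+N_\mu(\A)\,t_2(\mu)\}},
\]
exploits the product structure of $\Prob(\A)$ to reduce this to an $N$-th power of a single-node sum, expands that power via the multinomial and binomial theorems, and then reads off the joint distribution by matching coefficients against the definition of the MGF. Your approach is arguably more elementary and transparent, since it avoids introducing the transform and makes the combinatorial origin of each factor explicit; the paper's MGF computation, on the other hand, handles the constraint $\sum_\mu N_\mu=N$ and the lumping of colours $\beta\neq\alpha$ in one mechanical stroke, and is the kind of technique that scales smoothly to related calculations later in the paper (e.g.\ the integral representations used in the saddle-point analysis). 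Either argument is entirely adequate here.
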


\begin{proof}
To show this we consider the moment-generating function   
\begin{eqnarray}
\mathrm{Z}\left[\tv_1,\tv_2\right]&=&\sum_{N_{1}}\sum_{N_{1}^\alpha}\cdots\sum_{N_{K}}\sum_{N_{K}^\alpha}
\,\Prob\left(N_{1}^\alpha,N_{1},\ldots, N_{K}^\alpha,N_{K}\right)\nonumber\\
&&~~~~~~~~~~\times\rme^{\sum_{\mu=1}^K\left\{N_{\mu}^\alpha\, t_1(\mu)+N_{\mu}\,t_2(\mu)\right\}}\label{def:MGF},
\end{eqnarray}
where $\tv_1,\tv_2\in\R^K$. Now using in above the definition (\ref{def:Prob}) we obtain 
\begin{eqnarray}
\mathrm{Z}\left[\tv_1,\tv_2\right]&=&\sum_{\A}\Prob(\A)\,\rme^{\sum_{\mu=1}^K\left\{N_{\mu}^\alpha(\A)\, t_1(\mu)+N_{\mu}(\A)\,t_2(\mu)\right\}}\nonumber\\
&=&\left[\sum_{\beta=1}^L \sum_{\nu=1}^K\Prob(\beta, \nu)\,\rme^{\sum_{\mu=1}^K\,\delta_{\mu;\, \nu}\{t_1(\mu)\,\delta_{\alpha;\, \beta} +t_2(\mu)\}}\right]^N\nonumber\\
&=&\sum_{N_1=0}^N\cdots\sum_{N_K=0}^N \Prob\left(N_1,\cdots,N_K\vert N \right)\prod_{\mu=1}^K\Prob^\alpha_\mu\left(N^\alpha_\mu\vert N_\mu \right)\nonumber\\
&&\times \rme^{\sum_{\mu=1}^K\{t_1(\mu)\,N^\alpha_\mu +t_2(\mu)\,N_\mu\}}\label{eq:MGF}
\end{eqnarray}
and hence comparing above with  (\ref{def:MGF})  we arrive at  (\ref{eq:Prob-joint}).
\end{proof}
The consequence of the \emph{Lemma} \ref{lemma:joint-distr} is that the conditional probability distribution  (\ref{def:Prob-cond}) is given  by 
\begin{eqnarray}
\Prob\left(N_{1}^\alpha,\ldots, N_{K}^\alpha\vert N_{1},\ldots,N_{K}\right)&=&\prod_{\mu=1}^K\Prob^\alpha_\mu\left(N^\alpha_\mu\vert N_\mu \right)\label{eq:Prob-cond},
\end{eqnarray}
where $\sum_{\mu=1}^KN_\mu=N$, which follows from the equation  (\ref{eq:Prob-joint}) and from the  property  $\Prob(Y,X)=\Prob(Y\vert X)\,\Prob(X)$ of a  joint probability distribution. Furthermore,  the marginal of (\ref{def:Prob}) is the product of distributions 
\begin{eqnarray}
\Prob^\alpha\!\left(N_\mu,N^\alpha_\mu\vert N\right)
&=&\Prob_\mu\!\left(N_\mu\vert N\right)\,\Prob^\alpha_\mu\!\left(N^\alpha_\mu\vert N_\mu \right),\label{eq:P-marginal}
\end{eqnarray}
where  $\Prob_\mu\left(N_\mu\vert N\right)$ is the binomial distribution 
\begin{eqnarray}
\Prob_\mu\left(N_\mu\vert N \right)&=&{N\choose N_\mu}\,\Prob^{N_\mu}(\mu)  \left[1-\Prob(\mu)\right]^{N-N_\mu}.
\end{eqnarray}
Finally, we  also obtain  the following 
\begin{corollary}
 The conditional probability (\ref{def:Prob-cond-M}) is the multivariate hypergeometric  distribution 
\begin{eqnarray}
%
\Prob\left(N_{1}^\alpha,\ldots, N_{K}^\alpha\vert N_{1},\ldots,N_{K}; M\right)
&=&\frac{\delta_{M;\sum_{\mu=1}^K N_{\mu}^\alpha}\prod_{\mu=1}^K {N_\mu\choose N^\alpha_\mu}}{{N\choose M}}\label{eq:Prob-cond-M}.
\end{eqnarray}
\end{corollary}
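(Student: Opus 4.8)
The plan is to read off (\ref{eq:Prob-cond-M}) from the definition (\ref{def:Prob-cond-M}) by substituting \emph{Lemma} \ref{lemma:joint-distr} and then normalising with the Chu--Vandermonde convolution identity.

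First I would clean up the numerator $\Prob(N_1^\alpha,\ldots,N_K^\alpha;N_1,\ldots,N_K;M)$ of (\ref{def:Prob-cond-M}). In its definition (\ref{def:Prob-joint-master}) the constraint $\delta_{M;\sum_{\mu=1}^KN_\mu^\alpha(\A)}$ can be taken out of the sum over $\A$ and rewritten as $\delta_{M;\sum_{\mu=1}^KN_\mu^\alpha}$, since the accompanying product $\prod_{\mu=1}^K\delta_{N_\mu^\alpha;N_\mu^\alpha(\A)}$ already forces $N_\mu^\alpha(\A)=N_\mu^\alpha$ for all $\mu$ wherever the summand is nonzero. What is left of the sum is precisely (\ref{def:Prob}), so by \emph{Lemma} \ref{lemma:joint-distr}
\begin{eqnarray}
\Prob\left(N_{1}^\alpha,\ldots, N_{K}^\alpha; N_{1},\ldots,N_{K};M\right) &=& \delta_{M;\sum_{\mu=1}^KN_\mu^\alpha}\,\Prob\left(N_1,\ldots,N_K\vert N\right)\prod_{\mu=1}^K\Prob^\alpha_\mu\left(N^\alpha_\mu\vert N_\mu\right).\nonumber
\end{eqnarray}
The denominator $\Prob(N_1,\ldots,N_K;M)$ is then just the sum of the right-hand side over $N_1^\alpha,\ldots,N_K^\alpha$, and the multinomial factor $\Prob(N_1,\ldots,N_K\vert N)$ — not depending on those indices — pulls out of that sum and cancels in the ratio (\ref{def:Prob-cond-M}).

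What remains is the ratio of $\delta_{M;\sum_\mu N_\mu^\alpha}\prod_{\mu=1}^K\Prob^\alpha_\mu(N^\alpha_\mu\vert N_\mu)$ to its own sum over $N_1^\alpha,\ldots,N_K^\alpha$. Here I would use that in the sharding setting the colour of a node is independent of which committee it is assigned to, i.e. $\Prob(\alpha\vert\mu)=\Prob(\alpha)$ for every $\mu$. Substituting the binomial (\ref{def:binomial-distr}) and gathering exponents with $\sum_{\mu=1}^KN_\mu^\alpha=M$ and $\sum_{\mu=1}^KN_\mu=N$, each product $\prod_{\mu=1}^K\Prob^\alpha_\mu(N^\alpha_\mu\vert N_\mu)$ factors as $\Prob(\alpha)^M[1-\Prob(\alpha)]^{N-M}\prod_{\mu=1}^K{N_\mu\choose N^\alpha_\mu}$; the prefactor $\Prob(\alpha)^M[1-\Prob(\alpha)]^{N-M}$ is independent of the $N_\mu^\alpha$ and cancels between numerator and denominator. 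The statement then reduces to the normalisation
\begin{eqnarray}
\sum_{N_1^\alpha=0}^{N_1}\cdots\sum_{N_K^\alpha=0}^{N_K}\delta_{M;\sum_{\mu=1}^KN_\mu^\alpha}\prod_{\mu=1}^K{N_\mu\choose N^\alpha_\mu} &=& {N\choose M},\nonumber
\end{eqnarray}
which is the Chu--Vandermonde convolution for $\sum_{\mu=1}^KN_\mu=N$, and this yields (\ref{eq:Prob-cond-M}).

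I expect the delicate point to be the collapse of the colour weights to the single constant $\Prob(\alpha)^M[1-\Prob(\alpha)]^{N-M}$: this is exactly what distinguishes the \emph{central} multivariate hypergeometric of (\ref{eq:Prob-cond-M}) from a Fisher-type noncentral distribution, and it relies essentially on $\Prob(\alpha\vert\mu)$ being the same across committees. The Kronecker-delta manipulation in the first step and the Chu--Vandermonde normalisation are routine by comparison.
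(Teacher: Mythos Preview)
Your argument is correct and structurally the same as the paper's: factor the numerator via \emph{Lemma}~\ref{lemma:joint-distr}, cancel the multinomial, collapse the binomial weights to a constant on the $M$-shell, and normalise. The one substantive difference is in that last step: you invoke the (multivariate) Chu--Vandermonde convolution directly, whereas the paper evaluates the same sum by inserting the Fourier representation $\delta_{k;n}=\int_{-\pi}^{\pi}\frac{\rmd\hat m}{2\pi}\rme^{\rmi\hat m(k-n)}$, applying the binomial theorem committee-by-committee to get $(1+\rme^{-\rmi\hat m})^{N}$, and reading off ${N\choose M}$. Your route is quicker; the paper's has the advantage that the same integral machinery is reused verbatim to compute the marginals (\ref{eq:hypergeom-univ}) and, later, to set up the saddle-point treatment of (\ref{eq:1-delta}). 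You are also more careful than the paper in flagging that the collapse to $\Prob(\alpha)^{M}[1-\Prob(\alpha)]^{N-M}$ requires $\Prob(\alpha\vert\mu)$ to be independent of $\mu$; the paper uses this silently.
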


\begin{proof}
To show this  we first rewrite the probability distribution (\ref{def:Prob-cond-M}) as follows
\begin{eqnarray}
\Prob\left(N_{1}^\alpha,\ldots, N_{K}^\alpha\vert N_{1},\ldots,N_{K}; M\right)
&=&\frac{\delta_{M;\sum_{\mu=1}^K N_{\mu}^\alpha}\Prob\left(N_{1}^\alpha,N_{1},\ldots, N_{K}^\alpha, N_{K}\right)}{\sum_{\tilde{N}_{1}^\alpha} \cdots \sum_{\tilde{N}_{1}^\alpha}\delta_{M;\sum_{\mu=1}^K \tilde{N}_{\mu}^\alpha}\Prob\left(\tilde{N}_{1}^\alpha,N_{1},\ldots, \tilde{N}_{K}^\alpha, N_{K}\right)}\nonumber\\
&=&\frac{\delta_{M;\sum_{\mu=1}^K N_{\mu}^\alpha}\Prob\left(N_{1}^\alpha,\ldots, N_{K}^\alpha\vert N_{1},\ldots,N_{K}\right)}{\sum_{\tilde{N}_{1}^\alpha} \cdots \sum_{\tilde{N}_{1}^\alpha}\delta_{M;\sum_{\mu=1}^K \tilde{N}_{\mu}^\alpha}\Prob\left(\tilde{N}_{1}^\alpha,\ldots, \tilde{N}_{K}^\alpha\vert N_{1},\ldots,N_{K}\right)}.
\label{eq:Prob-cond-M-der-1} 
\end{eqnarray}
Second, by equation  (\ref{eq:Prob-cond}) the numerator
\begin{eqnarray}
\delta_{M;\sum_{\mu=1}^K N_{\mu}^\alpha}\Prob\left(N_{1}^\alpha,\ldots, N_{K}^\alpha\vert N_{1},\ldots,N_{K}\right)
&=&\delta_{M;\sum_{\mu=1}^K N_{\mu}^\alpha}\prod_{\mu=1}^K\Prob^\alpha_\mu\left(N^\alpha_\mu\vert N_\mu \right)\nonumber\\
%
%
&=&\delta_{M;\sum_{\mu=1}^K N_{\mu}^\alpha}\Prob^{M}(\alpha\vert \mu)  \left[1-\Prob(\alpha\vert \mu)\right]^{N-M}\prod_{\mu=1}^K {N_\mu\choose N^\alpha_\mu}
\end{eqnarray}
and hence  
\begin{eqnarray}
\Prob\left(N_{1}^\alpha,\ldots, N_{K}^\alpha\vert N_{1},\ldots,N_{K}; M\right)&=&\frac{\delta_{M;\sum_{\mu=1}^K N_{\mu}^\alpha}\prod_{\mu=1}^K {N_\mu\choose N^\alpha_\mu}}{\sum_{\tilde{N}^\alpha_1=0}^{N_1}\cdots\sum_{\tilde{N}^\alpha_K=0}^{N_K}\delta_{M;\sum_{\mu=1}^K \tilde{N}_{\mu}^\alpha}\prod_{\mu=1}^K {N_\mu\choose \tilde{N}^\alpha_\mu}}\label{eq:Prob-cond-M-der-2}.
\end{eqnarray}
Now the denominator, using the integral representation $\delta_{k;n}=\int_{-\pi}^{\pi} \frac{\rmd \hat{m}}{2\pi}\rme^{\rmi\hat{m}(k-n)}$, is given by
\begin{eqnarray}
\sum_{N^\alpha_1=0}^{N_1}\cdots\sum_{N^\alpha_K=0}^{N_K}\delta_{M;\sum_{\mu=1}^K N_{\mu}^\alpha}\prod_{\mu=1}^K {N_\mu\choose N^\alpha_\mu}&=&\int_{-\pi}^{\pi} \frac{\rmd \hat{m}}{2\pi}\rme^{\rmi\hat{m}M}\prod_{\mu=1}^K\sum_{N^\alpha_\mu=0}^{N_\mu} {N_\mu\choose N^\alpha_\mu}\rme^{-\rmi\hat{m} N_{\mu}^\alpha}\nonumber\\
%
%
&=&\int_{-\pi}^{\pi} \frac{\rmd \hat{m}}{2\pi}\rme^{\rmi\hat{m}M}\left(1+\rme^{-\rmi \hat{m}}\right)^{\sum_{\mu=1}^KN_\mu}={N\choose M}\label{eq:hyper-norm}
%
%
\end{eqnarray}
which completes the proof. 
\end{proof}
We note that  marginal of the hypergeometric distribution (\ref{eq:Prob-cond-M}) can be derived in a similar way as the sum in   (\ref{eq:hyper-norm}). In particular,  we consider, without loss of generality, the sum 
\begin{eqnarray}
\sum_{N^\alpha_2=0}^{N_2}\cdots\sum_{N^\alpha_K=0}^{N_K}\delta_{M;\sum_{\mu=1}^K N_{\mu}^\alpha}\prod_{\mu=1}^K {N_\mu\choose N^\alpha_\mu}
&=&\int_{-\pi}^{\pi} \frac{\rmd \hat{m}}{2\pi}\rme^{\rmi\hat{m}M}{N_1\choose N^\alpha_1}\rme^{-\rmi\hat{m} N_{1}^\alpha}\prod_{\mu=2}^K\sum_{N^\alpha_\mu=0}^{N_\mu} {N_\mu\choose N^\alpha_\mu}\rme^{-\rmi\hat{m} N_{\mu}^\alpha}\nonumber\\
&&={N_1\choose N^\alpha_1}\int_{-\pi}^{\pi} \frac{\rmd \hat{m}}{2\pi}\rme^{\rmi\hat{m}(M-N_{1}^\alpha)}\left(1+\rme^{-\rmi \hat{m}}\right)^{\sum_{\mu=2}^KN_\mu}\nonumber\\
&&={N_1\choose N^\alpha_1}{N-N_1\choose M-N^\alpha_1}
\end{eqnarray}
and hence the marginal of  (\ref{eq:Prob-cond-M}) is given by
\begin{eqnarray}
\Prob\left(N_{\mu}^\alpha\vert N_{\mu}; M\right)&=&\frac{{N_\mu\choose N^\alpha_\mu}{N-N_\mu\choose M-N^\alpha_\mu}}{{N\choose M}}=\frac{{M\choose N^\alpha_\mu}{N-M\choose N_\mu-N^\alpha_\mu}}{{N\choose N_\mu}}\label{eq:hypergeom-univ}.
\end{eqnarray}

\subsection{The probability of failure\label{ssection:failure}}
In random partitions sampled from the probability distributions (\ref{def:Prob})-(\ref{def:Prob-cond-M})  we are interested in   the event 
\begin{eqnarray}
%
\E&=&\sum_{\mu=1}^K\Ind\left[N_{\mu}^\alpha\geq \lfloor A N_{\mu}\rfloor+1\right]>0
\label{def:E}, 
\end{eqnarray}
 i.e.  in at  least \emph{one} committee the nodes of colour $\alpha$ exceed a fraction $A\in(0,1)$ of all nodes in the committee. We will call such event  \emph{failure} and define the probability of failure  
\begin{eqnarray}
%
\delta&=&\Prob\left(\E>0\right)\nonumber\\
&=&1-\Prob\left(\E=0\right)\label{def:delta}. 
\end{eqnarray}
 We note that  event $\E=0$ is equivalent to the $\prod_{\mu=1}^K\Ind\left[N_{\mu}^\alpha\leq \lfloor AN_\mu\rfloor\right]=1$ and for  the probability of failure in random partitions  we obtain the following results. 
\begin{theorem}
\label{theorem:lb-ub}
  For random partitions  sampled from the probability distribution (\ref{def:Prob-cond}) when  $\Prob(\alpha\vert\mu)<Q(\mu)<1$, where $Q(\mu)=\frac{\lfloor A N_\mu\rfloor+1}{N_\mu}$,  the probability of failure  $\delta$ is bounded as follows 
\begin{eqnarray}
1-\prod_{\mu=1}^K\left[1-\frac{\rme^{-N_\mu \mathrm{D}\left(Q(\mu)\vert\vert\Prob(\alpha\vert\mu)\right)}}{\sqrt{8N_\mu Q(\mu)\left(1-Q(\mu)\right)}}\right]\leq\delta\leq 1-\prod_{\mu=1}^K\left[1-\rme^{-N_\mu \mathrm{D}\left(Q(\mu)\vert\vert\Prob(\alpha\vert\mu)\right)}\right]\label{eq:lb-ub},
\end{eqnarray}
where $\mathrm{D}\left(Q\vert\vert P\right)$ with  $Q,P\in(0,1)$ is the Kullback$-$Leibler  (KL) divergence
\begin{eqnarray}
\mathrm{D}\left(Q\vert\vert P\right)&=&Q\log\frac{Q}{P}+(1-Q)\log\frac{1-Q}{1-P}
\end{eqnarray}
which is $0$  when $Q=P$ and is positive semi-definite when $Q\neq P$~\cite{Cover2012}.
\end{theorem}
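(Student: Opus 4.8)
The plan is to reduce the statement to a product of independent one-dimensional binomial tail estimates. By Lemma~\ref{lemma:joint-distr}, and in particular by~(\ref{eq:Prob-cond}), under the distribution~(\ref{def:Prob-cond}) the counts $N_1^\alpha,\ldots,N_K^\alpha$ are mutually independent and $N_\mu^\alpha$ has the binomial law~(\ref{def:binomial-distr}) with parameters $N_\mu$ and $\Prob(\alpha\vert\mu)$. The no-failure event $\E=0$ in~(\ref{def:E}) is exactly $\prod_{\mu=1}^K\Ind\left[N_\mu^\alpha\le\lfloor AN_\mu\rfloor\right]=1$, so independence gives
\begin{equation}
\Prob\left(\E=0\right)=\prod_{\mu=1}^K\Prob\left(N_\mu^\alpha\le\lfloor AN_\mu\rfloor\right)=\prod_{\mu=1}^K\left(1-q_\mu\right),\qquad q_\mu:=\Prob\left(N_\mu^\alpha\ge N_\mu Q(\mu)\right),
\end{equation}
where we used $\lfloor AN_\mu\rfloor+1=N_\mu Q(\mu)$; in particular $N_\mu Q(\mu)\in\mathbb{N}$ and, since $0<Q(\mu)<1$ by hypothesis, $1\le N_\mu Q(\mu)\le N_\mu-1$. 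By~(\ref{def:delta}) we then have $\delta=1-\prod_{\mu=1}^K(1-q_\mu)$, and because $(t_1,\ldots,t_K)\mapsto\prod_\mu(1-t_\mu)$ is non-increasing in each coordinate on $[0,1]^K$, it suffices to sandwich each $q_\mu$ between the two factors appearing in~(\ref{eq:lb-ub}); since $q_\mu\in[0,1]$ automatically, these factors lie in $[0,1]$ and the two-sided bound on $\delta$ follows by coordinatewise monotonicity.

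For the upper bound on $q_\mu$ I would invoke the classical Chernoff bound for the upper tail of a binomial: for $X\sim\mathrm{Bin}(n,p)$ and an integer $k$ with $p<k/n\le1$, Markov's inequality applied to $\rme^{\lambda X}$ followed by optimization over $\lambda>0$ gives $\Prob(X\ge k)\le\rme^{-n\mathrm{D}\left(k/n\,\vert\vert\,p\right)}$. Taking $n=N_\mu$, $p=\Prob(\alpha\vert\mu)$ and $k=N_\mu Q(\mu)$ — admissible precisely because the hypothesis supplies $\Prob(\alpha\vert\mu)<Q(\mu)<1$ — yields $q_\mu\le\rme^{-N_\mu\mathrm{D}\left(Q(\mu)\vert\vert\Prob(\alpha\vert\mu)\right)}$, and hence $\prod_\mu(1-q_\mu)\ge\prod_\mu[1-\rme^{-N_\mu\mathrm{D}(Q(\mu)\vert\vert\Prob(\alpha\vert\mu))}]$, which is the right-hand inequality of~(\ref{eq:lb-ub}).

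For the lower bound I would keep only the boundary term of the tail, $q_\mu\ge\Prob\left(N_\mu^\alpha=N_\mu Q(\mu)\right)=\binom{N_\mu}{N_\mu Q(\mu)}\Prob(\alpha\vert\mu)^{N_\mu Q(\mu)}\left(1-\Prob(\alpha\vert\mu)\right)^{N_\mu(1-Q(\mu))}$, and estimate the binomial coefficient from below by the Stirling-type bound $\binom{n}{nq}\ge\rme^{nH(q)}/\sqrt{8nq(1-q)}$, valid for $0<q<1$ with $nq\in\mathbb{N}$, where $H(q)=-q\log q-(1-q)\log(1-q)$. Combining this with the elementary identity $\Prob(\alpha\vert\mu)^{N_\mu Q(\mu)}\left(1-\Prob(\alpha\vert\mu)\right)^{N_\mu(1-Q(\mu))}=\rme^{-N_\mu\left[H(Q(\mu))+\mathrm{D}(Q(\mu)\vert\vert\Prob(\alpha\vert\mu))\right]}$ gives $q_\mu\ge\rme^{-N_\mu\mathrm{D}\left(Q(\mu)\vert\vert\Prob(\alpha\vert\mu)\right)}/\sqrt{8N_\mu Q(\mu)(1-Q(\mu))}$, which is the left-hand inequality of~(\ref{eq:lb-ub}).

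The only genuinely non-routine ingredient is the Stirling/Robbins lower bound on $\binom{n}{nq}$ with the explicit constant $1/\sqrt{8}$. I would either cite it as a standard estimate (see, e.g.,~\cite{Cover2012}) or derive it from Robbins' refinement $\sqrt{2\pi n}\,(n/\rme)^n\rme^{1/(12n+1)}\le n!\le\sqrt{2\pi n}\,(n/\rme)^n\rme^{1/(12n)}$, which reduces the claim to verifying that the resulting prefactor is $\ge1/\sqrt{8}$ uniformly over the admissible $n$ and $q$, the tightest cases being $nq\in\{1,N_\mu-1\}$. Everything else — the factorization of $\Prob(\E=0)$, the cross-entropy identity $H+\mathrm{D}=$ cross-entropy, and the monotonicity of $\prod_\mu(1-t_\mu)$ — is routine bookkeeping.
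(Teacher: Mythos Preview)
Your proposal is correct and follows essentially the same route as the paper: factor $\Prob(\E=0)$ as a product using the independence in~(\ref{eq:Prob-cond}), then sandwich each binomial tail $q_\mu$ between the two KL-divergence expressions. The only difference is that the paper cites both tail bounds directly from Lemma~4.7.2 of~\cite{Ash2012}, whereas you sketch their derivations (Chernoff for the upper bound, boundary term plus Stirling/Robbins for the lower); this is more explicit but not a different argument.
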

\begin{proof}
To show this we first, using the probability distribution (\ref{eq:Prob-cond}), define the probability\footnote{We use the definitions $\left\{N_{\mu}^\alpha\leq \lfloor AN_\mu\rfloor\right\}\equiv N_{1}^\alpha\leq \lfloor AN_1\rfloor,\ldots,N_{K}^\alpha\leq \lfloor AN_K\rfloor$ and $\{N_{\mu}\}\equiv N_{1},\ldots, N_{K}$.} 
\begin{eqnarray}
\Prob\left(\left\{N_{\mu}^\alpha\leq \lfloor AN_\mu\rfloor\right\}\vert \left\{N_{\mu}\right\}\right)&=&\sum_{N_{1}^\alpha}\cdots\sum_{N_{K}^\alpha}\Prob\left(N_{1}^\alpha,\ldots, N_{K}^\alpha\vert N_{1},\ldots,N_{K}\right)\prod_{\mu=1}^K\Ind\left[N_{\mu}^\alpha\leq \lfloor AN_\mu\rfloor\right]
\end{eqnarray}
and consider the probability of failure   
\begin{eqnarray}
\delta=1- \Prob\left(\left\{N_{\mu}^\alpha\leq \lfloor AN_\mu\rfloor\right\}\vert \left\{N_{\mu}\right\}\right)&=&1-\prod_{\mu=1}^K\sum_{N^\alpha_\mu=0}^{\lfloor A N_\mu\rfloor}\Prob^\alpha_\mu\left(N^\alpha_\mu\vert N_\mu \right)\nonumber\\
&=&1-\prod_{\mu=1}^K\left[1-\Prob(N^\alpha\geq \lfloor A N_\mu\rfloor+1\vert N_\mu)\right]\label{eq:delta-binom},
\end{eqnarray}
where we defined the probability
\begin{eqnarray}
\Prob(N^\alpha\geq \lfloor A N_\mu\rfloor+1\vert N_\mu)&=&\sum_{N^\alpha=\lfloor A N_\mu\rfloor+1}^{N_\mu}\Prob^\alpha_\mu\left(N^\alpha\vert N_\mu \right). 
\end{eqnarray}
Second, the probability distribution   $\Prob^\alpha_\mu\left(N^\alpha\vert N_\mu \right)$, which was defined in (\ref{def:binomial-distr}),  is the binomial    and hence it   can be bounded from the above   
\begin{eqnarray}
&&\Prob(N^\alpha\geq \lfloor A N_\mu\rfloor+1\vert N_\mu)\leq \rme^{-N_\mu \mathrm{D}\left(Q(\mu)\vert\vert\Prob(\alpha\vert\mu)\right)}\label{eq:binom-tail-ub-1}
\end{eqnarray}
and from the below 
\begin{eqnarray}
&&\Prob(N^\alpha\geq \lfloor A N_\mu\rfloor+1\vert N_\mu)
\geq \frac{\rme^{-N_\mu \mathrm{D}\left(Q(\mu)\vert\vert\Prob(\alpha\vert\mu)\right)}}{\sqrt{8N_\mu Q(\mu)\left(1-Q(\mu)\right)}}
\end{eqnarray}
for $\Prob(\alpha\vert\mu)<Q(\mu)<1$, where $Q(\mu)=\frac{\lfloor A N_\mu\rfloor+1}{N_\mu}$,
by the \emph{Lemma} 4.7.2 in \cite{Ash2012}. Using these bounds in (\ref{eq:delta-binom}) gives us the lower and upper bound in (\ref{eq:lb-ub}).
\end{proof}
We note that a \emph{tighter} upper bound in (\ref{eq:lb-ub})  can be obtained by using the recent result 
\begin{eqnarray}
\Prob(N^\alpha\geq \lfloor A N_\mu\rfloor+1\vert N_\mu)\leq \frac{1}{1-r(\mu)}\frac{\rme^{-N_\mu \mathrm{D}\left(Q(\mu)\vert\vert\Prob(\alpha\vert\mu)\right)}}{\sqrt{2\pi Q(\mu)\left(1-Q(\mu)\right)N_\mu}},\label{eq:binom-tail-ub-2}
\end{eqnarray}
where $\Prob(\alpha\vert\mu)<Q(\mu)<1$, of  the \emph{Theorem} 1 in~\cite{Ferrante2021}. Here the term $\frac{1}{1-r(\mu)}$, where $r(\mu)=\frac{\Prob(\alpha\vert\mu)\left(1-Q(\mu)\right)}{ Q(\mu)\left(1-\Prob(\alpha\vert\mu)\right)}$, is an  upper bound on the sum  $\sum_{k=0}^{\left(1-Q(\mu)\right)N_\mu}r^k(\mu)=(1-r^{\left(1-Q(\mu)\right)N_\mu+1}(\mu))/(1-r(\mu))$. 

\subsubsection{Saddle-point method}
For random partitions generated by the probability distribution (\ref{def:Prob-cond-M}), i.e. \emph{exactly} $M$  nodes out of $N$ are of colour $\alpha$, the probability of failure is given by  
\begin{eqnarray}
\delta&=&1-\Prob\left(\{N_{\mu}^\alpha\leq \lfloor AN_\mu\rfloor\}\vert \{N_\mu\}, M\right)\label{eq:delta-hyper}, 
\end{eqnarray}
where the probability 
\begin{eqnarray}
\Prob\left(\{N_{\mu}^\alpha\leq \lfloor AN_\mu\rfloor\}\vert \{N_{\mu}\};M\right)
&=&\sum_{N^\alpha_1=0}^{N_1}\cdots\sum_{N^\alpha_K=0}^{N_K}\Prob\left(N_{1}^\alpha,\ldots, N_{K}^\alpha\vert N_{1},\ldots,N_{K}; M\right)\nonumber\\
&&\times\prod_{\mu=1}^K\Ind\left[N_{\mu}^\alpha\leq \lfloor AN_\mu\rfloor\right]\nonumber\\
&=&\frac{\sum_{N^\alpha_1=0}^{N_1}\cdots\sum_{N^\alpha_K=0}^{N_K}\delta_{M;\sum_{\mu=1}^K N_{\mu}^\alpha}\prod_{\mu=1}^K {N_\mu\choose N^\alpha_\mu}\Ind\left[N_{\mu}^\alpha\leq \lfloor AN_\mu\rfloor\right]}{{N\choose M}}\label{eq:1-delta}
\end{eqnarray}
follows from  the result (\ref{eq:Prob-cond-M}) for the distribution (\ref{def:Prob-cond-M}). 

We would like to obtain  a simpler  analytic expression for  $\delta$ but the challenge here is  in computing  the sums  in  (\ref{eq:1-delta}) efficiently. It is not clear how to do this for any $N$, $M$ and $K$, but in the limit $N\rightarrow\infty$ with  $N_\mu<\infty$ such that $N=\sum_{\mu=1}^K N_\mu$ and $M/N\in (0,1)$ we can obtain  the following result
\begin{eqnarray}
\Prob\left(\{N_{\mu}^\alpha\leq \lfloor AN_\mu\rfloor\}\vert \{N_\mu\}, M\right)&=&\!
\sqrt{\frac{ 
N\,P(1-P)
}  {\sum_{\mu=1}^K\left[\langle(N^\alpha)^2\rangle_{A,Q,N_\mu;}-\langle N^\alpha\rangle^2_{A,Q,N_\mu;}\right]  }}\rme^{N\Psi[Q]}\!+\!O(1/N)\label{eq:SP-Prob-N-large-Q},
\end{eqnarray}
where $P=M/N$ and we defined the function 
\begin{eqnarray}
\Psi[Q]&=&D(P\vert\vert Q)+\frac{1}{N}\sum_{\mu=1}^K\log\sum_{N^\alpha=0}^{\lfloor AN_\mu\rfloor} {N_\mu\choose N^\alpha}\,Q^{N^\alpha}[1-Q]^{N_\mu-N^\alpha}\label{def:Psi-Q}
\end{eqnarray}
and average 
\begin{eqnarray}
\langle f(N^\alpha)\rangle_{A,Q,N_\mu;}&=&\frac{\sum_{N^\alpha=0}^{\lfloor AN_\mu\rfloor} {N_\mu\choose N^\alpha}Q^{N^\alpha}[1-Q]^{N_\mu-N^\alpha}f(N^\alpha)}{\sum_{N^\alpha=0}^{\lfloor AN_\mu\rfloor} {N_\mu\choose N^\alpha}Q^{N^\alpha}[1-Q]^{N_\mu-N^\alpha}}
\end{eqnarray}
for any function $f(N^\alpha)$.  In above  $Q\in (0,1)$  is the solution of the equation 
\begin{eqnarray}
P&=& \frac{1}{N}\sum_{\mu=1}^K
\langle N^\alpha\rangle_{A,Q,N_\mu;}\label{eq:SP-large-N-Q}
\end{eqnarray}
for  $P\leq A$.

In this work,  we will  present only a heuristic  argument that leads to the equation   (\ref{eq:SP-Prob-N-large-Q}), but we envisage that a more rigorous proof of the latter is also possible. Our approach to computing (\ref{eq:1-delta}) is to use the \emph{saddle-point} method of integration, which is quite often used in  statistical physics~\cite{Nishimori2001} and analytic combinatorics \cite{Flajolet2009} to compute moment generating functions,  but first in order to apply  this method we need to represent the sums in  (\ref{eq:1-delta}) as an integral.  To this end, we rewrite the numerator in (\ref{eq:1-delta}) as follows 
\begin{eqnarray}
&&\sum_{N^\alpha_1=0}^{N_1}\cdots\sum_{N^\alpha_K=0}^{N_K}\delta_{M;\sum_{\mu=1}^K N_{\mu}^\alpha}\prod_{\mu=1}^K {N_\mu\choose N^\alpha_\mu}\Ind\left[N_{\mu}^\alpha\leq \lfloor AN_\mu\rfloor\right]\nonumber\\
&&~~~~~=\sum_{N^\alpha_1=0}^{N_1}\cdots\sum_{N^\alpha_K=0}^{N_K} 
\frac{1}{2\pi\rmi}\oint_{\vert z\vert=1}\!\! z^{-M-1+\sum_{\mu=1}^KN_{\mu}^\alpha}\,\rmd z\prod_{\mu=1}^K {N_\mu\choose N^\alpha_\mu}\Ind\left[N_{\mu}^\alpha\leq \lfloor AN_\mu\rfloor\right]\nonumber\\
&&~~~=\frac{1}{2\pi\rmi}\oint_{\vert z\vert=1}\!\left\{  
\prod_{\mu=1}^K\sum_{N^\alpha=0}^{N_\mu} {N_\mu\choose N^\alpha}\Ind\left[N^\alpha\leq \lfloor AN_\mu\rfloor\right]z^{N^\alpha}\right\}z^{-M-1}\rmd z\nonumber\\
&&~~~=\frac{1}{2\pi\rmi}\oint_{\vert z\vert=1}\! 
\left\{\prod_{\mu=1}^K\phi_A(z\vert N_\mu)\right\}\,z^{-M-1}\rmd z,
%
%
\end{eqnarray}
where we used the integral representation   $\delta_{n;m}=\frac{1}{2\pi\rmi}\oint_{\vert z\vert=1} z^{-n-1+m}\,\rmd z$ and defined the function 
\begin{eqnarray}
\phi_{A}(z\vert N_\mu)=\sum_{N^\alpha=0}^{\lfloor AN_\mu\rfloor} {N_\mu\choose N^\alpha}z^{N^\alpha}\label{def:phi_A}.
\end{eqnarray}

Second, the denominator
\begin{eqnarray}
{N\choose M}&=&\sum_{N^\alpha_1=0}^{N_1}\cdots\sum_{N^\alpha_K=0}^{N_K}\delta_{M;\sum_{\mu=1}^K N_{\mu}^\alpha}\prod_{\mu=1}^K {N_\mu\choose N^\alpha_\mu}\nonumber\\
&=&\sum_{N^\alpha_1=0}^{N_1}\cdots\sum_{N^\alpha_K=0}^{N_K} 
\frac{1}{2\pi\rmi}\oint_{\vert z\vert=1}\!\! z^{-M-1+\sum_{\mu=1}^KN_{\mu}^\alpha}\,\rmd z\prod_{\mu=1}^K {N_\mu\choose N^\alpha_\mu}\nonumber\\
&=&\frac{1}{2\pi\rmi}\oint_{\vert z\vert=1}\! 
\left\{\prod_{\mu=1}^K\phi_1(z\vert N_\mu)\right\}\,z^{-M-1}\rmd z
%
%
\end{eqnarray}
and hence 
\begin{eqnarray}
\Prob\left(\{N_{\mu}^\alpha\leq \lfloor AN_\mu\rfloor\}\vert \{N_\mu\}, M\right)&=&\frac{\oint_{\vert z\vert=1}\! 
\left\{\prod_{\mu=1}^K\phi_A(z\vert N_\mu)\right\}\,z^{-M-1}\rmd z}{\oint_{\vert z\vert=1}\! 
\left\{\prod_{\mu=1}^K\phi_1(z\vert N_\mu)\right\}\,z^{-M-1}\rmd z}.
\end{eqnarray}

Let us  consider the scenario of $N\rightarrow\infty$ with  $N_\mu<\infty$ such that $N=\sum_{\mu=1}^K N_\mu$ and $0<M/N<1$. Then for the integral
\begin{eqnarray}
\oint_{\vert z\vert=1}\! 
\left\{\prod_{\mu=1}^K\phi_A(z\vert N_\mu)\right\}\,z^{-M-1}\rmd z
&=&\oint_{\vert z\vert=1}\rmd z\,\rme^{N\Psi_{A}(z)-\log(z)},
\end{eqnarray}
where in above we defined  $P=M/N$ and 
\begin{eqnarray}
\Psi_{A}(z)&=&-P\log(z) +\frac{1}{N}\sum_{\mu=1}^K\log\phi_A(z\vert N_\mu),
\end{eqnarray}
we can  try to use the saddle-point integration method~\cite{Fedoryuk1977}. Applying this method allows us to write 
\begin{eqnarray}
\oint_{\vert z\vert=1}\!\!\rmd z\,\rme^{N\Psi_{A}(z)-\log(z)}
&=&\sqrt{-\frac{2\pi}{N\ddot{\Psi}_{A}(z_0(A))}}\left[1/z_0(A)+O(1/N)\right]\rme^{N\Psi_{A}(z_0(A))},
\end{eqnarray}
where in above we used the definition $\ddot{\Psi}_A(z)=\frac{\partial^2}{\partial z^2} \Psi_A(z)$ and $z_0(A)$ is the solution of the equation $\dot{\Psi}_A(z)=0$, which is given by 
\begin{eqnarray}
P&=& \frac{1}{N}\sum_{\mu=1}^K
\frac{\sum_{N^\alpha=0}^{\lfloor AN_\mu\rfloor} {N_\mu\choose N^\alpha}z^{N^\alpha}N^\alpha}
{\sum_{N^\alpha=0}^{\lfloor AN_\mu\rfloor} {N_\mu\choose N^\alpha}z^{N^\alpha}}\label{eq:SP-large-N}.
\end{eqnarray}
Thus, for the probability (\ref{eq:1-delta}) we obtain the following expression 
\begin{eqnarray}
\Prob\left(\{N_{\mu}^\alpha\leq \lfloor AN_\mu\rfloor\}\vert \{N_\mu\}, M\right)
&=&\frac{1/z_0(A)+O(1/N)}{1/z_0(1)+O(1/N)}
\sqrt{\frac{\ddot{\Psi}_{1}(z_0(1))}{\ddot{\Psi}_{A}(z_0(A))}}
\rme^{N[\Psi_{A}(z_0(A))-\Psi_{1}(z_0(1))]}\label{eq:1-delta-SP-large-N}.
\end{eqnarray}

Let us now consider the equation (\ref{eq:SP-large-N}). In the latter $0<P<1$ and hence for $z>0$ we can set $z=Q/(1-Q)$, where $0<Q<1$, in this equation, giving us  
\begin{eqnarray}
P&=& \frac{1}{N}\sum_{\mu=1}^K
\langle N^\alpha\rangle_{A,Q,N_\mu;}
\end{eqnarray}
where we defined the average 
\begin{eqnarray}
\langle f(N^\alpha)\rangle_{A,Q,N_\mu;}&=&\frac{\sum_{N^\alpha=0}^{\lfloor AN_\mu\rfloor} {N_\mu\choose N^\alpha}Q^{N^\alpha}[1-Q]^{N_\mu-N^\alpha}f(N^\alpha)}{\sum_{N^\alpha=0}^{\lfloor AN_\mu\rfloor} {N_\mu\choose N^\alpha}Q^{N^\alpha}[1-Q]^{N_\mu-N^\alpha}}.
\end{eqnarray}
Using  $N^\alpha\leq \lfloor AN_\mu\rfloor$ in the equation (\ref{eq:SP-large-N-Q})  gives us the inequality  
\begin{eqnarray}
P&\leq& \frac{1}{N}\sum_{\mu=1}^K
\frac{\sum_{N^\alpha=0}^{\lfloor AN_\mu\rfloor} {N_\mu\choose N^\alpha}\,Q^{N^\alpha}[1-Q]^{N_\mu-N^\alpha}\lfloor AN_\mu\rfloor}
{\sum_{N^\alpha=0}^{\lfloor AN_\mu\rfloor} {N_\mu\choose N^\alpha}Q^{N^\alpha}[1-Q]^{N_\mu-N^\alpha}}\nonumber\\
&&~~~~~~~~~~=\frac{1}{N}\sum_{\mu=1}^K\lfloor AN_\mu\rfloor\leq A\label{eq:SP-large-N-Q-ineq}
\end{eqnarray}
and hence equation (\ref{eq:SP-large-N-Q}) has a solution only for $P\leq A$. 

Furthermore, for $A=1$  the $Q=P$ is the solution of the equation (\ref{eq:SP-large-N-Q})  giving us  $z_0(1)=P/(1-P)$. The  latter can be used to compute  the function
\begin{eqnarray}
\Psi_{1}(z_0(1))&=&-P\log(P/(1-P))-\log(1-P)+\frac{1}{N}\sum_{\mu=1}^K\log\sum_{N^\alpha=0}^{N_\mu} {N_\mu\choose N^\alpha}P^{N^\alpha}[1-P]^{N_\mu-N^\alpha}\nonumber\\
&=&\mathcal{S}(P),
\end{eqnarray}
where $\mathcal{S}(P)=-P\log(P)-(1-P)\log(1-P) $
%
%
%
%
is Shannon's entropy. For $\Psi_{A}(z_0(A))$ with $z_0(A)=Q/(1-Q)$, where $Q$ is the solution of the equation (\ref{eq:SP-large-N-Q}),  we obtain 
\begin{eqnarray}
\Psi_{A}(z_0(A))&=&-P\log(Q/(1-Q))-\log(1-Q)\nonumber\\
&&+\frac{1}{N}\sum_{\mu=1}^K\log\sum_{N^\alpha=0}^{\lfloor AN_\mu\rfloor} {N_\mu\choose N^\alpha}\,Q^{N^\alpha}[1-Q]^{N_\mu-N^\alpha}
\end{eqnarray}
and hence the difference
\begin{eqnarray}
\Psi_{A}(z_0(A))-\Psi_{1}(z_0(1))&=&-P\log(Q/(1-Q))-\log(1-Q)\nonumber\\
&&~~~~~~~+\frac{1}{N}\sum_{\mu=1}^K\log\sum_{N^\alpha=0}^{\lfloor AN_\mu\rfloor} {N_\mu\choose N^\alpha}\,Q^{N^\alpha}[1-Q]^{N_\mu-N^\alpha}-\mathcal{S}(P)\nonumber\\
&=&D(P\vert\vert Q)+\frac{1}{N}\sum_{\mu=1}^K\log\sum_{N^\alpha=0}^{\lfloor AN_\mu\rfloor} {N_\mu\choose N^\alpha}\,Q^{N^\alpha}[1-Q]^{N_\mu-N^\alpha}.
\end{eqnarray}

Let us consider now the derivative
\begin{eqnarray}
\ddot{\Psi}_{A}(z)&=&\frac{P}{z^2} +\frac{1}{N}\sum_{\mu=1}^K\frac{\partial^2}{\partial z^2}\log\phi_A(z\vert N_\mu)\nonumber\\
&=&\frac{P}{z^2} +\frac{1}{N}\sum_{\mu=1}^K\left[\frac{\ddot{\phi}_A(z\vert N_\mu)}{\phi_A(z\vert N_\mu)}-\left\{\frac{\dot{\phi}_A(z\vert N_\mu)}{\phi_A(z\vert N_\mu)}\right\}^2\right]\label{eq:d2Psi},
\end{eqnarray}
where 
\begin{eqnarray}
\dot{\phi}_{A}(z\vert N_\mu)=\sum_{N^\alpha=0}^{\lfloor AN_\mu\rfloor} {N_\mu\choose N^\alpha}z^{N^\alpha-1}N^\alpha\label{def:dphi}.
\end{eqnarray}
and
\begin{eqnarray}
\ddot{\phi}_{A}(z\vert N_\mu)=\sum_{N^\alpha=0}^{\lfloor AN_\mu\rfloor} {N_\mu\choose N^\alpha}z^{N^\alpha-2}N^\alpha(N^\alpha-1)\label{def:d2phi}.
\end{eqnarray}
For $z=Q/(1-Q)$, where $Q$ is a solution of the equation (\ref{eq:SP-large-N-Q}), above gives us 
\begin{eqnarray}
\ddot{\Psi}_{A}(z)&=&\left(\frac{1-Q}{Q}\right)^2\frac{1}{N}\sum_{\mu=1}^K\left[\langle(N^\alpha)^2\rangle_{A,Q,N_\mu;}-\langle N^\alpha\rangle^2_{A,Q,N_\mu;}\right]\label{eq:d2Psi-Q}
\end{eqnarray}
and for $z=P/(1-P)$ it gives us 
\begin{eqnarray}
\ddot{\Psi}_{1}(z)&=&\left(\frac{1-P}{P}\right)^2\frac{1}{N}\sum_{\mu=1}^K\left[\langle(N^\alpha)^2\rangle_{1,P,N_\mu;}-\langle N^\alpha\rangle^2_{1,P,N_\mu;}\right]\nonumber\\
&=&\left(\frac{1-P}{P}\right)^2\frac{1}{N}\sum_{\mu=1}^KN_\mu P(1-P)=\frac{\left(1-P\right)^3}{P}\label{eq:d2Psi_1}.
\end{eqnarray}
Hence using all of the above results in (\ref{eq:1-delta-SP-large-N}) gives us the equation (\ref{eq:SP-Prob-N-large-Q}).

\subsubsection{The union bound} 
The failure event (\ref{def:E}) is equivalent to the union   
$$\cup_{\mu=1}^K\left\{N_{\mu}^\alpha(\A)\geq \lfloor AN_{\mu}(\A)\rfloor+1\right\}$$ and hence the probability of failure $\delta=\Prob\left(\cup_{\mu=1}^K\left\{N_{\mu}^\alpha\geq \lfloor AN_{\mu}\rfloor+1\right\}\right)$. The latter  can be exploited to derive the following 
\begin{theorem}
\label{theorem:U-bound}
For random partitions  sampled from the probability distribution (\ref{def:Prob}) when  $\Prob(\alpha\vert\mu)<Q(\mu)<1$, where $Q(\mu)=\frac{\lfloor A N_\mu\rfloor+1}{N_\mu}$,  the probability of failure 

\begin{eqnarray}
%
\delta&\leq&\sum_{\mu=1}^K \rme^{-N\Phi(\Prob(\alpha, \mu),Q(\mu))}\label{eq:Prob-U-ub}, 
\end{eqnarray}
where 
\begin{eqnarray}
\Phi(\Prob(\alpha, \mu),Q(\mu))&=&-\log\left(\Prob(\mu)\,\rme^{-\mathrm{D}\left(Q(\mu)\vert\vert\Prob(\alpha\vert\mu)\right)}+1-\Prob(\mu)\right).\label{def:Phi}
\end{eqnarray}
\end{theorem}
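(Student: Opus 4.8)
The plan is to reduce the multi-committee event to $K$ single-committee events via the union bound, and then to control each single-committee probability by a Chernoff estimate that exploits the product structure already isolated in \emph{Lemma}~\ref{lemma:joint-distr}. Concretely, as noted just above the statement, the failure event equals $\cup_{\mu=1}^K\{N^\alpha_\mu\geq\lfloor AN_\mu\rfloor+1\}$, so the union bound gives $\delta\leq\sum_{\mu=1}^K\Prob(N^\alpha_\mu\geq\lfloor AN_\mu\rfloor+1)$, where each summand is now evaluated under the \emph{marginal} law of the single pair $(N_\mu,N^\alpha_\mu)$. By equation~(\ref{eq:P-marginal}) this marginal factorises as $\Prob_\mu(N_\mu\vert N)\,\Prob^\alpha_\mu(N^\alpha_\mu\vert N_\mu)$: under~(\ref{def:Prob}) the count $N_\mu$ is binomial with parameters $N$ and $\Prob(\mu)$, and, conditionally on $N_\mu$, the count $N^\alpha_\mu$ is binomial with parameters $N_\mu$ and $\Prob(\alpha\vert\mu)$ — equivalently, each of the $N$ nodes lands in committee $\mu$ with colour $\alpha$ independently with probability $\Prob(\alpha,\mu)=\Prob(\mu)\Prob(\alpha\vert\mu)$.

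Next I would condition on the value of $N_\mu$ and apply to the inner binomial tail exactly the Chernoff/KL bound~(\ref{eq:binom-tail-ub-1}) used in the proof of \emph{Theorem}~\ref{theorem:lb-ub}, namely $\Prob(N^\alpha_\mu\geq\lfloor AN_\mu\rfloor+1\vert N_\mu)\leq\rme^{-N_\mu\mathrm{D}(Q(\mu)\vert\vert\Prob(\alpha\vert\mu))}$ with $Q(\mu)=\frac{\lfloor AN_\mu\rfloor+1}{N_\mu}$, legitimate whenever $\Prob(\alpha\vert\mu)<Q(\mu)<1$. Averaging this conditional bound over $N_\mu$ and recognising the resulting sum as the binomial moment-generating function — the same elementary identity $\sum_{n}{N\choose n}\Prob^n(\mu)(1-\Prob(\mu))^{N-n}x^n=(1-\Prob(\mu)+\Prob(\mu)x)^N$ that drives the proof of \emph{Lemma}~\ref{lemma:joint-distr} — evaluated at $x=\rme^{-\mathrm{D}(Q(\mu)\vert\vert\Prob(\alpha\vert\mu))}$, yields $\Prob(N^\alpha_\mu\geq\lfloor AN_\mu\rfloor+1)\leq(\Prob(\mu)\rme^{-\mathrm{D}(Q(\mu)\vert\vert\Prob(\alpha\vert\mu))}+1-\Prob(\mu))^N=\rme^{-N\Phi(\Prob(\alpha,\mu),Q(\mu))}$ with $\Phi$ as in~(\ref{def:Phi}); summing over $\mu$ gives~(\ref{eq:Prob-U-ub}).

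The delicate step is the averaging: $Q(\mu)$, hence the exponent $\mathrm{D}(Q(\mu)\vert\vert\Prob(\alpha\vert\mu))$, actually depends on the random $N_\mu$ through the floor, so pulling it out of the expectation as a constant is not automatic. I would handle this by carrying a single Chernoff parameter $\lambda\geq0$ through the whole computation rather than first optimising inside the conditional estimate: since $\lfloor AN_\mu\rfloor+1\geq AN_\mu$ one has $\{N^\alpha_\mu\geq\lfloor AN_\mu\rfloor+1\}\subseteq\{N^\alpha_\mu-AN_\mu\geq0\}$, so by Markov and independence of the $N$ nodes $\Prob(N^\alpha_\mu\geq\lfloor AN_\mu\rfloor+1)\leq\E\,\rme^{\lambda(N^\alpha_\mu-AN_\mu)}=(1-\Prob(\mu)+\Prob(\mu)\rme^{-\lambda A}(1-\Prob(\alpha\vert\mu)+\Prob(\alpha\vert\mu)\rme^{\lambda}))^N$, and a one-line minimisation over $\lambda$ (minimiser $\rme^{\lambda}=\frac{A(1-\Prob(\alpha\vert\mu))}{\Prob(\alpha\vert\mu)(1-A)}$, requiring $A>\Prob(\alpha\vert\mu)$) collapses the inner factor to $\rme^{-\mathrm{D}(A\vert\vert\Prob(\alpha\vert\mu))}$ and reproduces the form~(\ref{def:Phi}); since $Q(\mu)>A$ and $\mathrm{D}(\cdot\vert\vert\Prob(\alpha\vert\mu))$ is increasing on $(\Prob(\alpha\vert\mu),1)$, the bound with $Q(\mu)$ is the (tighter) limiting-threshold version of this estimate. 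I expect the only real work to be this optimisation identifying the KL divergence; everything else is bookkeeping with the factorisation of \emph{Lemma}~\ref{lemma:joint-distr} and the union bound.
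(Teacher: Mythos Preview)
Your approach---union bound over the $K$ committees, then condition on $N_\mu$, apply the KL tail bound~(\ref{eq:binom-tail-ub-1}) to the inner binomial, and collapse the $N_\mu$-average via the binomial moment-generating function---is exactly the paper's proof. The subtlety you flag, that $Q(\mu)=\frac{\lfloor AN_\mu\rfloor+1}{N_\mu}$ depends on the random $N_\mu$ and so cannot literally be pulled outside the expectation as a constant, is real and is glossed over in the paper's argument as well; your single-$\lambda$ Chernoff route is a clean rigorous patch, reproducing the form~(\ref{def:Phi}) with $A$ in place of $Q(\mu)$ (a slightly weaker but now unambiguous bound).
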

\begin{proof}
First, by Boole's inequality, also known as  the union bound,  we have that
\begin{eqnarray}
\Prob\left(\cup_{\mu=1}^K\left\{N_{\mu}^\alpha\geq \lfloor AN_{\mu}(\A)\rfloor+1\right\}\right)
&\leq&\sum_{\mu=1}^K\Prob\left(N_{\mu}^\alpha\geq \lfloor AN_{\mu}\rfloor+1\right)\label{eq:Boole-ineq} 
\end{eqnarray}
and hence the probability of failure
\begin{eqnarray}
\delta &\leq&\sum_{\mu=1}^K\Prob\left(N_{\mu}^\alpha\geq \lfloor AN_{\mu}\rfloor+1\right)\label{eq:Boole-ineq-delta}. 
\end{eqnarray}
Second, using the result (\ref{eq:Prob-joint}) for the distribution (\ref{def:Prob}) we obtain the probability 
\begin{eqnarray}
\Prob\left(N^\alpha_\mu\geq \lfloor AN_{\mu}\rfloor+1\right)&=&\sum_{N_\mu=0}^N\sum_{N^\alpha_\mu=\lfloor AN_{\mu}\rfloor+1}^{N_\mu}\Prob^\alpha\!\left(N_\mu,N^\alpha_\mu\vert N\right)\nonumber\\
&=&\sum_{N_\mu=0}^N\Prob_\mu\!\left(N_\mu\vert N\right)\sum_{N^\alpha_\mu=\lfloor AN_{\mu}\rfloor+1}^{N_\mu}\Prob^\alpha_\mu\!\left(N^\alpha_\mu\vert N_\mu \right),
\end{eqnarray}
where in above we used that 
$\Prob^\alpha\!\left(N_\mu,N^\alpha_\mu\vert N\right)$, defined in  (\ref{eq:P-marginal}), is the marginal of (\ref{eq:Prob-joint}).

Now the probability 
\begin{eqnarray}
\Prob\left(N^\alpha_\mu\geq \lfloor AN_{\mu}\rfloor+1\vert N_{\mu}\right)&=&\sum_{N^\alpha_\mu=\lfloor AN_{\mu}\rfloor+1}^{N_\mu}\Prob^\alpha_\mu\!\left(N^\alpha_\mu\vert N_\mu \right)\leq \rme^{-N_\mu \mathrm{D}\left(Q(\mu)\vert\vert\Prob(\alpha\vert\mu)\right)}
\end{eqnarray}
by the \emph{Lemma} 4.7.2 in \cite{Ash2012} for $\Prob(\alpha\vert\mu)<Q(\mu)<1$, where $Q(\mu)=\frac{\lfloor AN_{\mu}\rfloor+1}{N_{\mu}}$, and hence we obtain 
\begin{eqnarray}
\Prob\left(N^\alpha_\mu\geq \lfloor AN_{\mu}\rfloor+1\right)&\leq&\sum_{N_\mu=0}^N\Prob_\mu\!\left(N_\mu\vert N\right)\rme^{-N_\mu \mathrm{D}\left(Q(\mu)\vert\vert\Prob(\alpha\vert\mu)\right)}\nonumber\\
&&=\left[\Prob(\mu)\,\rme^{-\mathrm{D}\left(Q(\mu)\vert\vert\Prob(\alpha\vert\mu)\right)}+1-\Prob(\mu)\right]^N\nonumber\\
&&=\rme^{-N\Phi(\Prob(\alpha, \mu),Q(\mu))},
\end{eqnarray}
which can be used to bound the sum in (\ref{eq:Boole-ineq-delta}) and hence giving us the inequality (\ref{eq:Prob-U-ub}).
\end{proof}
We note that a tighter, but slightly more complicated, upper bound on $\delta$ can be obtained if we use (\ref{eq:binom-tail-ub-2}) instead of (\ref{eq:binom-tail-ub-1}) in the proof. Furthermore, we obtained a less tight and slightly simpler upper bound in the following   
\begin{corollary}
 For random partitions generated  by the probability distribution (\ref{def:Prob}) when  $\Prob(\alpha\vert\mu)<Q(\mu)<1$, where $Q(\mu)=\frac{\lfloor A N_\mu\rfloor+1}{N_\mu}$,  the probability of failure
\begin{eqnarray}
\delta&\leq&\sum_{\mu=1}^K \rme^{-N\Phi(\Prob(\alpha, \mu),Q(\mu))}\leq\sum_{\mu=1}^K \rme^{-N\Prob(\mu)\,\phi(\Prob(\alpha\vert \mu),Q(\mu))}\label{eq:Prob-U-ub-2}, 
\end{eqnarray}
where $\phi(\Prob(\alpha\vert \mu),Q(\mu))=1-\rme^{-\mathrm{D}\left(Q(\mu)\vert\vert\Prob(\alpha\vert\mu)\right)}$.
\end{corollary}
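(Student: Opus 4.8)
The first inequality in (\ref{eq:Prob-U-ub-2}) is exactly the statement of Theorem~\ref{theorem:U-bound}, so the only thing to prove is the second inequality, and for that it suffices to establish the pointwise bound $\Phi(\Prob(\alpha,\mu),Q(\mu)) \ge \Prob(\mu)\,\phi(\Prob(\alpha\vert\mu),Q(\mu))$ for each $\mu\in[K]$. Granting this, $\rme^{-N\Phi(\Prob(\alpha,\mu),Q(\mu))} \le \rme^{-N\Prob(\mu)\,\phi(\Prob(\alpha\vert\mu),Q(\mu))}$ (the factor $-N$ reverses the inequality, exponentiation preserves it), and summing over $\mu$ yields the claim together with the first inequality inherited from Theorem~\ref{theorem:U-bound}.

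To prove the pointwise bound I would introduce the shorthand $x_\mu = \rme^{-\mathrm{D}(Q(\mu)\vert\vert\Prob(\alpha\vert\mu))}$, which lies in $(0,1]$ because the KL divergence is non-negative and finite (both $Q(\mu)$ and $\Prob(\alpha\vert\mu)$ lie in $(0,1)$). In this notation definition (\ref{def:Phi}) reads
\begin{eqnarray}
\Phi(\Prob(\alpha,\mu),Q(\mu)) &=& -\log\bigl(\Prob(\mu)\,x_\mu + 1-\Prob(\mu)\bigr) = -\log\bigl(1-\Prob(\mu)(1-x_\mu)\bigr),\label{eq:plan-Phi}
\end{eqnarray}
while $\phi(\Prob(\alpha\vert\mu),Q(\mu)) = 1-x_\mu$ by definition, so the inequality to be shown becomes $-\log\bigl(1-\Prob(\mu)(1-x_\mu)\bigr) \ge \Prob(\mu)(1-x_\mu)$. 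Putting $u_\mu = \Prob(\mu)(1-x_\mu)$ and noting that $u_\mu\in[0,1)$ since $0\le\Prob(\mu)\le 1$ and $0\le 1-x_\mu<1$, this is just the elementary inequality $-\log(1-u)\ge u$ valid for $u\in[0,1)$, itself an instance of $\log(1+t)\le t$ (for $t>-1$) with $t=-u$.

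The argument is entirely elementary, so I do not expect a genuine obstacle. The only point requiring a moment's care is that the logarithm in (\ref{eq:plan-Phi}) be well defined, i.e.\ that $u_\mu<1$; this holds because $x_\mu>0$ forces $1-x_\mu<1$, regardless of the value of $\Prob(\mu)$. One could alternatively skip the substitution and verify $f(x)=-\log(1-\Prob(\mu)(1-x))-\Prob(\mu)(1-x)\ge 0$ on $(0,1]$ directly by observing $f(1)=0$ and $f'(x)\le 0$, but the route through $\log(1+t)\le t$ is shorter and makes clear that the corollary is merely a convenience-form relaxation of Theorem~\ref{theorem:U-bound}.
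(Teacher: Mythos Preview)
Your proposal is correct and follows essentially the same route as the paper: both reduce the second inequality to the pointwise bound $\Phi\ge\Prob(\mu)\,\phi$ and obtain it from the elementary inequality $\log x\le x-1$ (equivalently $\log(1+t)\le t$) applied to the argument $1-\Prob(\mu)(1-x_\mu)$ of the logarithm in (\ref{def:Phi}). The only difference is cosmetic---you introduce the substitution $u_\mu=\Prob(\mu)(1-x_\mu)$ explicitly, whereas the paper applies $\log x\le x-1$ directly.
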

\begin{proof}
Using  $\log(x)\leq x-1$ in the function (\ref{def:Phi}) gives us 
\begin{eqnarray}
-\Phi(\Prob(\alpha, \mu),Q(\mu))&\leq&\Prob(\mu)\left[\rme^{-\mathrm{D}\left(Q(\mu)\vert\vert\Prob(\alpha\vert\mu)\right)}-1\right]\nonumber\\
&&~~~~~~~~~~~~~~~~~~~~~~=-\Prob(\mu)\phi(\Prob(\alpha\vert \mu),Q(\mu))
\end{eqnarray}
and hence using above in (\ref{eq:Prob-U-ub}) gives us the second inequality in (\ref{eq:Prob-U-ub-2}).
\end{proof}
Finally,  a slight modification of the above proof also gives us the following   
\begin{corollary}
For random partitions generated  by the probability distribution (\ref{def:Prob-cond}) when  $\Prob(\alpha\vert\mu)<Q(\mu)<1$, where $Q(\mu)=\frac{\lfloor A N_\mu\rfloor+1}{N_\mu}$,  the probability of failure  
\begin{eqnarray}
\delta
&\leq&\sum_{\mu=1}^K \rme^{-N_\mu \mathrm{D}\left(Q(\mu)\vert\vert\Prob(\alpha\vert\mu)\right)}.\label{eq:Prob-U-ub-3}
\end{eqnarray}
\end{corollary}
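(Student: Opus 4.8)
The plan is to follow the proof of \emph{Theorem} \ref{theorem:U-bound} almost verbatim, the sole difference being that under the distribution (\ref{def:Prob-cond}) the committee sizes $N_1,\ldots,N_K$ are fixed rather than random, so no averaging over a binomial law of $N_\mu$ is needed and a cleaner exponent results.

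First I would write the failure event (\ref{def:E}) as the union $\cup_{\mu=1}^K\{N_{\mu}^\alpha\geq\lfloor AN_\mu\rfloor+1\}$ and apply Boole's inequality exactly as in (\ref{eq:Boole-ineq})--(\ref{eq:Boole-ineq-delta}), obtaining
\begin{eqnarray}
\delta&\leq&\sum_{\mu=1}^K\Prob\left(N_{\mu}^\alpha\geq\lfloor AN_\mu\rfloor+1\vert N_1,\ldots,N_K\right).
\end{eqnarray}
Equivalently, one can start from the exact expression $\delta=1-\prod_{\mu=1}^K[1-\Prob(N^\alpha\geq\lfloor AN_\mu\rfloor+1\vert N_\mu)]$ already obtained in (\ref{eq:delta-binom}) and invoke the elementary bound $1-\prod_\mu(1-p_\mu)\leq\sum_\mu p_\mu$ valid for $p_\mu\in[0,1]$. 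By the consequence (\ref{eq:Prob-cond}) of \emph{Lemma} \ref{lemma:joint-distr}, the conditional law of $(N_1^\alpha,\ldots,N_K^\alpha)$ given $(N_1,\ldots,N_K)$ factorises into the binomials (\ref{def:binomial-distr}), so its $\mu$-th marginal is precisely $\Prob^\alpha_\mu(\cdot\vert N_\mu)$, whence
\begin{eqnarray}
\Prob\left(N_{\mu}^\alpha\geq\lfloor AN_\mu\rfloor+1\vert N_1,\ldots,N_K\right)&=&\sum_{N^\alpha=\lfloor AN_\mu\rfloor+1}^{N_\mu}\Prob^\alpha_\mu\left(N^\alpha\vert N_\mu\right).
\end{eqnarray}

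Finally I would bound this binomial tail by the Chernoff estimate of \emph{Lemma} 4.7.2 in \cite{Ash2012}, namely $\sum_{N^\alpha=\lfloor AN_\mu\rfloor+1}^{N_\mu}\Prob^\alpha_\mu(N^\alpha\vert N_\mu)\leq\rme^{-N_\mu\mathrm{D}(Q(\mu)\vert\vert\Prob(\alpha\vert\mu))}$ — the same bound (\ref{eq:binom-tail-ub-1}) already used above, valid exactly under the hypothesis $\Prob(\alpha\vert\mu)<Q(\mu)<1$ with $Q(\mu)=\frac{\lfloor AN_\mu\rfloor+1}{N_\mu}$ — and substitute it into the union bound to obtain (\ref{eq:Prob-U-ub-3}). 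There is no genuine obstacle: the only point worth stressing is that, because each $N_\mu$ is deterministic under (\ref{def:Prob-cond}), we do not average $\rme^{-N_\mu\mathrm{D}(\cdots)}$ against a binomial distribution of $N_\mu$ the way the proof of \emph{Theorem} \ref{theorem:U-bound} does, which is precisely why the exponent here is $N_\mu\mathrm{D}(\cdots)$ and not $N\Phi(\cdots)$. If a sharper constant is desired, one could replace the tail bound (\ref{eq:binom-tail-ub-1}) with the tighter estimate (\ref{eq:binom-tail-ub-2}) of \cite{Ferrante2021}, at the price of a slightly more complicated expression.
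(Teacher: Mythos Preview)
Your proposal is correct and follows essentially the same approach as the paper's own proof: both adapt the argument of \emph{Theorem}~\ref{theorem:U-bound} by observing that, under (\ref{def:Prob-cond}), the committee sizes $N_\mu$ are deterministic, so the averaging step against $\Prob_\mu(N_\mu\vert N)$ is replaced by a Kronecker delta and the binomial tail bound (\ref{eq:binom-tail-ub-1}) applies directly to give (\ref{eq:Prob-U-ub-3}). Your write-up is in fact more explicit than the paper's two-line sketch, but the underlying idea is identical.
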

\begin{proof}
To show this we exploit that the marginal of the probability distribution (\ref{eq:Prob-cond}), i.e. the result for  (\ref{def:Prob-cond}), is the binomial  (\ref{def:binomial-distr}). Thus if we replace $\Prob_\mu\!\left(N_\mu\vert N\right)$ in the proof of \emph{Theorem} \ref{theorem:U-bound} by the $\delta_{N_\mu;\tilde{N}_\mu}$, where $\sum_{\mu=1}^K\tilde{N}_\mu=N$, we obtain (\ref{eq:Prob-U-ub-3}). 
\end{proof}
We note that for the hypergeometric distribution (\ref{eq:Prob-cond-M})  the probability of failure
\begin{eqnarray}
\delta
&\leq&\sum_{\mu=1}^K \sum_{N^\alpha_\mu=\lfloor AN_{\mu}\rfloor+1}^{N_\mu}\Prob\left(N_{\mu}^\alpha\vert N_{\mu}; M\right)\label{eq:hyper-U-b-exact}\\
&&\leq \sum_{\mu=1}^K\rme^{-N_\mu \mathrm{D}\left(Q(\mu)\vert\vert M/N\right)}\label{eq:hyper-U-b}, 
\end{eqnarray}
where $M$ is the number of nodes of colour $\alpha$. The first inequality in above is an    application of (\ref{eq:Boole-ineq-delta}) with  the marginal (\ref{eq:hypergeom-univ})  and the second 
inequality is a consequence of  Hoeffding bound~\cite{Hoeffding1963}~\cite{Chvatal1979}. The union bounds (\ref{eq:Prob-U-ub}), (\ref{eq:Prob-U-ub-3}) and (\ref{eq:hyper-U-b}) suggest that for $K<\infty$ the probability of failure $\delta\rightarrow0$  if we assume that $N_\mu/N>0$ for all $\mu$ as $N\rightarrow\infty$.  However, it is easy to construct a partition for each of these bounds  such that the bound exceeds unity. Here it is interesting to compare the union bound in (\ref{eq:Prob-U-ub-3}) with the upper bound in (\ref{eq:lb-ub}) which was derived using a different approach.  The latter is bounded above by unity,  but  the former can be above unity  suggesting that  (\ref{eq:Prob-U-ub-3}) is a looser bound on $\delta$ which is  confirmed by the following
\begin{lemma}
For random partitions generated  by the probability distribution (\ref{def:Prob-cond}) when  $\Prob(\alpha\vert\mu)<Q(\mu)<1$, where $Q(\mu)=\frac{\lfloor A N_\mu\rfloor+1}{N_\mu}$,  the probability of failure  
\begin{eqnarray}
\delta&\leq& 1-\prod_{\mu=1}^K\left[1-\rme^{-N_\mu \mathrm{D}\left(Q(\mu)\vert\vert\Prob(\alpha\vert\mu)\right)}\right]
\leq\sum_{\mu=1}^K \rme^{-N_\mu \mathrm{D}\left(Q(\mu)\vert\vert\Prob(\alpha\vert\mu)\right)}.\label{eq:Prob-U-ub-4}
\end{eqnarray}
\end{lemma}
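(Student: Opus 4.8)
The first inequality in (\ref{eq:Prob-U-ub-4}) is not new: it is exactly the upper bound already established in \emph{Theorem} \ref{theorem:lb-ub} for random partitions sampled from (\ref{def:Prob-cond}), and the hypotheses of the present \emph{Lemma} are identical to those there. So the plan is to concentrate entirely on the second inequality, which is a purely algebraic statement with nothing to do with the underlying probabilistic model: for any reals $x_1,\ldots,x_K\in(0,1)$ one has $1-\prod_{\mu=1}^K(1-x_\mu)\leq\sum_{\mu=1}^K x_\mu$.

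First I would abbreviate $x_\mu=\rme^{-N_\mu \mathrm{D}\left(Q(\mu)\vert\vert\Prob(\alpha\vert\mu)\right)}$ for $\mu\in[K]$ and note that, under the standing assumption $\Prob(\alpha\vert\mu)<Q(\mu)<1$, the KL divergence $\mathrm{D}\left(Q(\mu)\vert\vert\Prob(\alpha\vert\mu)\right)$ is finite and strictly positive (it vanishes only when $Q=P$), so $x_\mu\in(0,1)$; in particular each factor $1-x_\mu$ is positive. I would then prove the claimed algebraic inequality by induction on $K$. The base case $K=1$ is an equality, both sides being $x_1$. For the inductive step, assuming the bound for $K$, write
\[
1-\prod_{\mu=1}^{K+1}(1-x_\mu)=1-(1-x_{K+1})\prod_{\mu=1}^{K}(1-x_\mu)=x_{K+1}+(1-x_{K+1})\Big[1-\prod_{\mu=1}^{K}(1-x_\mu)\Big],
\]
apply the induction hypothesis to the bracket, and then use $0<1-x_{K+1}\leq 1$ together with $\sum_{\mu=1}^K x_\mu\geq 0$ to obtain $1-\prod_{\mu=1}^{K+1}(1-x_\mu)\leq x_{K+1}+\sum_{\mu=1}^K x_\mu=\sum_{\mu=1}^{K+1}x_\mu$. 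Substituting back the definition of $x_\mu$ yields the second inequality in (\ref{eq:Prob-U-ub-4}).

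There is essentially no obstacle here; the only point requiring care is the verification that $x_\mu<1$, since it is positivity of the factors $1-x_\mu$ that legitimizes the estimate $1-x_{K+1}\leq1$ in the inductive step, and this is exactly what the hypothesis $\Prob(\alpha\vert\mu)<Q(\mu)$ guarantees. An alternative to the induction would be to expand $\prod_{\mu}(1-x_\mu)$ by inclusion–exclusion and bound the resulting alternating sum term by term, but the induction is shorter and avoids the sign bookkeeping, so that is the route I would take.
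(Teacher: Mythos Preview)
Your proposal is correct and follows essentially the same route as the paper: the first inequality is taken from \emph{Theorem}~\ref{theorem:lb-ub}, and the second is established by induction on $K$ using the elementary recursion $1-\prod_{\mu\leq K}(1-x_\mu)=x_K+(1-x_K)\bigl[1-\prod_{\mu\leq K-1}(1-x_\mu)\bigr]$. The paper organises the same induction slightly differently, defining the difference $\Delta_K=1-\prod_{\mu}(1-W(\mu))-\sum_{\mu}W(\mu)$ and showing $\Delta_K\leq0$ via $\Delta_K=\Delta_{K-1}-W(K)\bigl[1-\prod_{\mu\leq K-1}(1-W(\mu))\bigr]$, but this is just a cosmetic rearrangement of your computation.
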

\begin{proof}
To show this we consider the difference 
\begin{eqnarray}
\Delta_K&=& 1-\prod_{\mu=1}^K\left[1-W(\mu)\right]
-\sum_{\mu=1}^K W(\mu),\label{eq:ub-diff}
\end{eqnarray}
where we defined the ``weight'' $W(\mu)=\rme^{-N_\mu \mathrm{D}\left(Q(\mu)\vert\vert\Prob(\alpha\vert\mu)\right)}$. The latter, because of $\mathrm{D}\left(Q(\mu)\vert\vert\Prob(\alpha\vert\mu)\right)\geq 0$, belongs to the interval  $(0,1]$ for any finite $N_\mu$.  
We note that $\Delta_1=0$ and $\Delta_2<0$. Let us assume that $\Delta_{K-1}\leq 0$ and consider $\Delta_{K}$ as follows 
\begin{eqnarray}
\Delta_K&=& 1-\left[1-W(K)\right]\prod_{\mu=1}^{K-1}\left[1-W(\mu)\right]
-\sum_{\mu=1}^{K-1} W(\mu)-W(K)\nonumber\\
&=&\Delta_{K-1}-  W(K)\left\{1-\prod_{\mu=1}^{K-1}\left[1-W(\mu)\right]
 \right\}.
\end{eqnarray}
Now from the $\prod_{\mu=1}^{K-1}\left[1-W(\mu)\right]\leq1$ and  $\Delta_{K-1}\leq0$  follows that $\Delta_{K}\leq0$ which, by induction on $K$, completes the proof. 
\end{proof}

\section{Application}
We consider $N$ nodes in the  blockchain network where a fraction of nodes, $P$,  is adversarial.  We assume that to improve its scalability, this network is split into $K$  shards (or committees). The latter allows the processing of all transactions, encoded  in the blockchain, in a distributed way. However, the security of the whole network can be only  guaranteed, with  \emph{probability} $1-\delta$, if $P$ is \emph{not} exceeding a certain threshold $A$ in \emph{each} committee.    The probability of failure $\delta$ considered in section  \ref{ssection:failure} corresponds to 
either the case when the number of adversarial nodes  is $N\times P$ on \emph{average}, where $\delta$ is given by the equation (\ref{eq:delta-binom}) with $\Prob(\alpha\vert\mu)=P$,  or to the case when it is \emph{exactly} $N\times P$, where $\delta$ is  given by the equation (\ref{eq:delta-hyper}) with $P=M/N$.  In the latter case, the probability $\delta$  was  computed, for  $N=nK$ nodes distributed into $K$ committees, only    by simulations~\cite{Hafid2020}, and the former case, to the best of our knowledge, has not been considered. Here we consider both cases for the number of nodes $N=nK+r$, where $n\in\mathbb{N}$ and $r\in\{0,\ldots,K-1\}$,  with $n$ and $n+1$  nodes distributed, respectively,   into  $K-r$ and   $r$  committees.
\begin{figure}[t]
\setlength{\unitlength}{0.57mm}
\begin{picture}(230,100)
\put(0,0){\includegraphics[height=100\unitlength,width=100\unitlength]{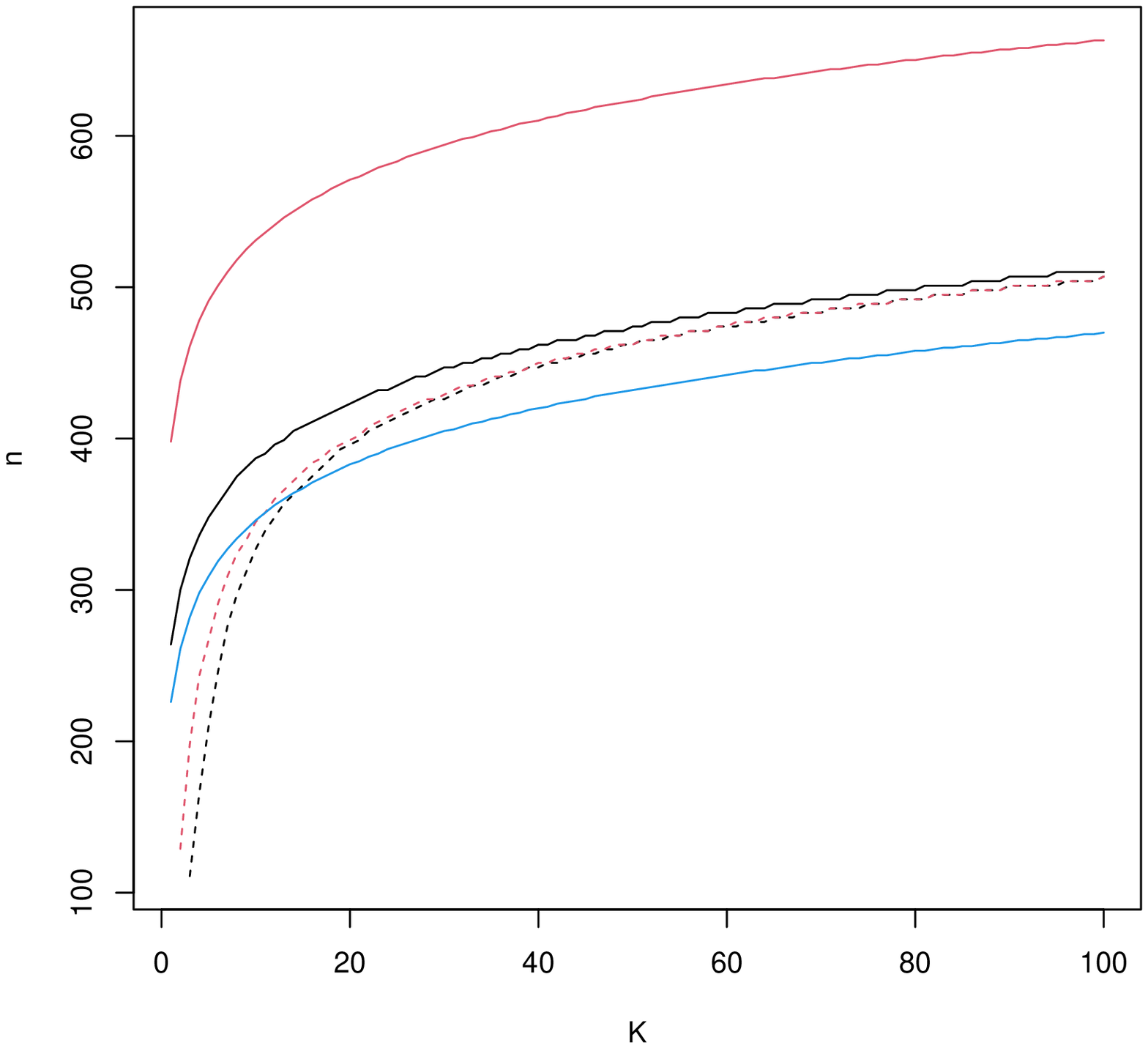}}
\put(115,-0){\includegraphics[height=100\unitlength,width=100\unitlength]{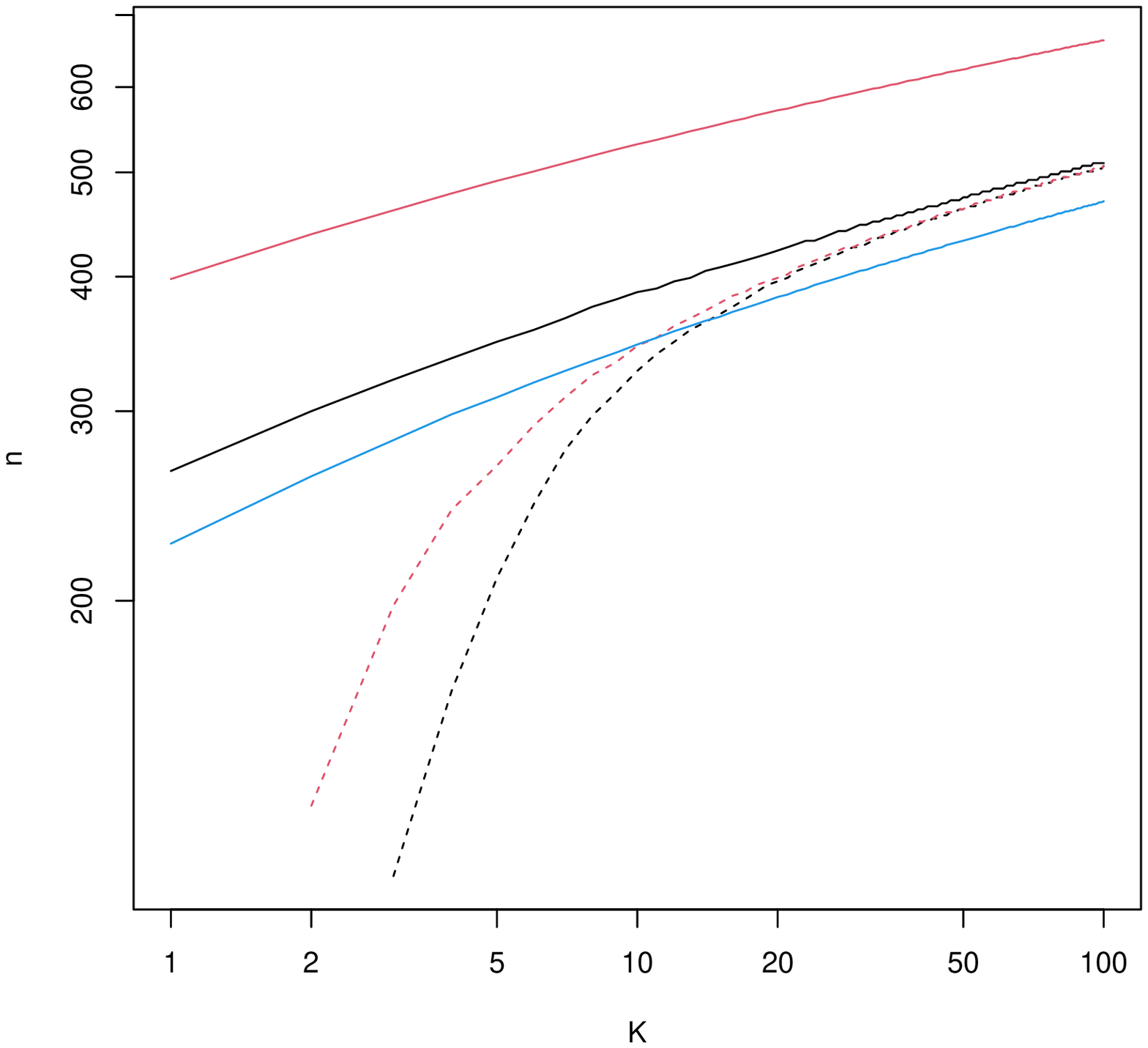}}
\end{picture}
\caption{The committee size $n$ as a function of the number of committees $K$ plotted for the probability of failure $\delta=10^{-3}$ when $A=1/3$ (fraction of a committee)  and $P=1/4$ (fraction of adversarial nodes). The  number of nodes $N=nK$ is increasing from left to right. The black solid  line is obtained by  solving  equation (\ref{eq:delta-binom}), which assumes $N\times P$ adversarial nodes on \emph{average}, numerically.  The red and blue solid lines  correspond to, respectively, the upper   and  lower bounds in (\ref{eq:lb-ub}).  The red and black dashed lines  were obtained from, respectively,  the upper bound (\ref{eq:hyper-U-b-exact}) and asymptotic result (\ref{eq:SP-Prob-N-large-Q})  for (\ref{eq:delta-hyper}).  The latter assumes \emph{exactly} $N\times P$ of adversarial nodes.} 
\label{fig:n-delta-1/1000}
\end{figure}

 \new{First}, we use the exact equation (\ref{eq:delta-binom}), the asymptotic result (\ref{eq:SP-Prob-N-large-Q})  for (\ref{eq:delta-hyper}), and the upper bound (\ref{eq:hyper-U-b-exact})  to find the committee size $n$ given the  probability of failure $\delta$ and number of committees $K$. We find that $n$ computed from the probability (\ref{eq:delta-binom}) is an upper bound on $n$ computed from the probability (\ref{eq:delta-hyper}) as can be seen in Figure \ref{fig:n-delta-1/1000}. The latter  is consistent with  our observation, that the probability (\ref{eq:delta-binom}) is an upper bound for (\ref{eq:delta-hyper}), \new{in simulations as can be seen } in the Figures \ref{fig:delta-N-1000} and \ref{fig:delta-N-10000}.  Furthermore,  for $\Prob(\alpha\vert\mu)=P$ and  $N=nK$, using  the bounds in (\ref{eq:lb-ub}), we obtain  
\begin{eqnarray}
1-\left[1-\frac{\rme^{-n \mathrm{D}\left(Q_n\vert\vert P\right)}}{\sqrt{8n\,Q_n\left(1-Q_n\right)}}\right]^{K}\leq\delta\leq 1-\left[1-\rme^{-n \mathrm{D}\left(Q_n\vert\vert P\right)}\right]^{K}\label{eq:lb-ub-n},
\end{eqnarray}
where we have defined $Q_n=\frac{\lfloor An\rfloor+1}{n}$.   The above inequalities  can be used to obtain bounds on the committee size $n$, given the probability of failure $\delta$ and  a number of committees $K$. The upper bound   
\begin{eqnarray}
n< \frac{-\log\left(1-\left(1-\delta\right)^{1/K}\right)}{\mathrm{D}\left(A\vert\vert P\right)} \label{eq:n-ub}
\end{eqnarray}
follows from the inequality 
\begin{eqnarray}
1-\left[1-\rme^{-n \mathrm{D}\left(Q_n\vert\vert P\right)}\right]^{K}
< 1-\left[1-\rme^{-n \mathrm{D}\left(A\vert\vert P\right)}\right]^{K}, 
\end{eqnarray}
where to obtain above we used that $A<Q_n\leq A+1/n$ and   that $\mathrm{D}\left(P+\epsilon\vert\vert P\right)$ is monotonic increasing function of 
$\epsilon$ when $2P-1<\epsilon\leq1-P$. Also the function  
\begin{eqnarray}
f(P+\epsilon)=
\mathrm{D}\left(P+\epsilon\vert\vert P\right)+\frac{1}{2n}\log\left((P+\epsilon)(1-P-\epsilon)\right)
%
\end{eqnarray}
is monotonic increasing in  
$\epsilon$ when  $2P-1<\epsilon<1/2-P$ and hence $f(Q_n)\leq f(A+1/n)\leq \tilde{f}(A)$, where $\tilde{f}(A)=\max_n f(A+1/n)$. Using the latter in the lower bound in (\ref{eq:lb-ub-n}) gives us 
\begin{eqnarray}
&&1-\left[1-\frac{\rme^{-n \tilde{f}(A)}}{\sqrt{8n}}\right]^{K}\leq1-\left[1-\frac{\rme^{-n \mathrm{D}\left(Q_n\vert\vert P\right)}}{\sqrt{8n\,Q_n\left(1-Q_n\right)}}\right]^{K}\leq\delta
\end{eqnarray}
and from above, using $\log(x)\geq 1-1/x$ for $x>0$, we obtain the lower bound
\begin{eqnarray}
 \frac{1-\log\left(8\right)-
 2\log\left(1-\left(1-\delta\right)^{1/K}\right)}{2\tilde{f}(A)+1}\leq n\label{eq:n-lb}.
\end{eqnarray}
Now for $K\rightarrow\infty$, with  $\delta\in(0,1)$, we have   
\begin{eqnarray}
 -\log\left(1-\left(1-\delta\right)^{1/K}\right)&=& -\log  \left( -\log  \left( 1-\delta \right)  \right) +\log  \left( K
 \right)\nonumber\\
 &&-{\frac {\log  \left( 1-\delta \right) }{2\,K}}-{\frac {
  \log^2\!  \left( 1-\delta \right) }{24\,{K}^{2}}}\nonumber\\
 &&+
\frac {  \log^4\!  \left( 1-\delta \right)}{2880\,{K}^{
4}}+O(1/K^6)\label{eq:large-K-expansion}
\end{eqnarray}
and for $\delta\rightarrow0$, with $K<\infty$, we have
\begin{eqnarray}
-\log\left(1-\left(1-\delta\right)^{1/K}\right)&=&\log  \left( K \right)+\log  \left( 1/\delta \right) -\frac {  K-1  
}{2\,K}\delta +O(\delta^2)\label{eq:small-delta-expansion}. 
\end{eqnarray}
Thus both the upper bound in (\ref{eq:n-ub}) and the lower bound in (\ref{eq:n-lb})  grow at most logarithmically in $K$  (or $1/\delta$) as  $K\rightarrow\infty$ (or as $\delta\rightarrow0$).   The latter suggests that to sustain the  same probability of failure $\delta$ (or the same number of committees $K$) the  committee size $n$ has to be increased   \emph{logarithmically} with $K$ (or $1/\delta$) as can be seen in Figure \ref{fig:n-delta-1/1000}.
\begin{figure}[t]
\setlength{\unitlength}{0.57mm}
\begin{picture}(230,100)
\put(0,0){\includegraphics[height=100\unitlength,width=100\unitlength]{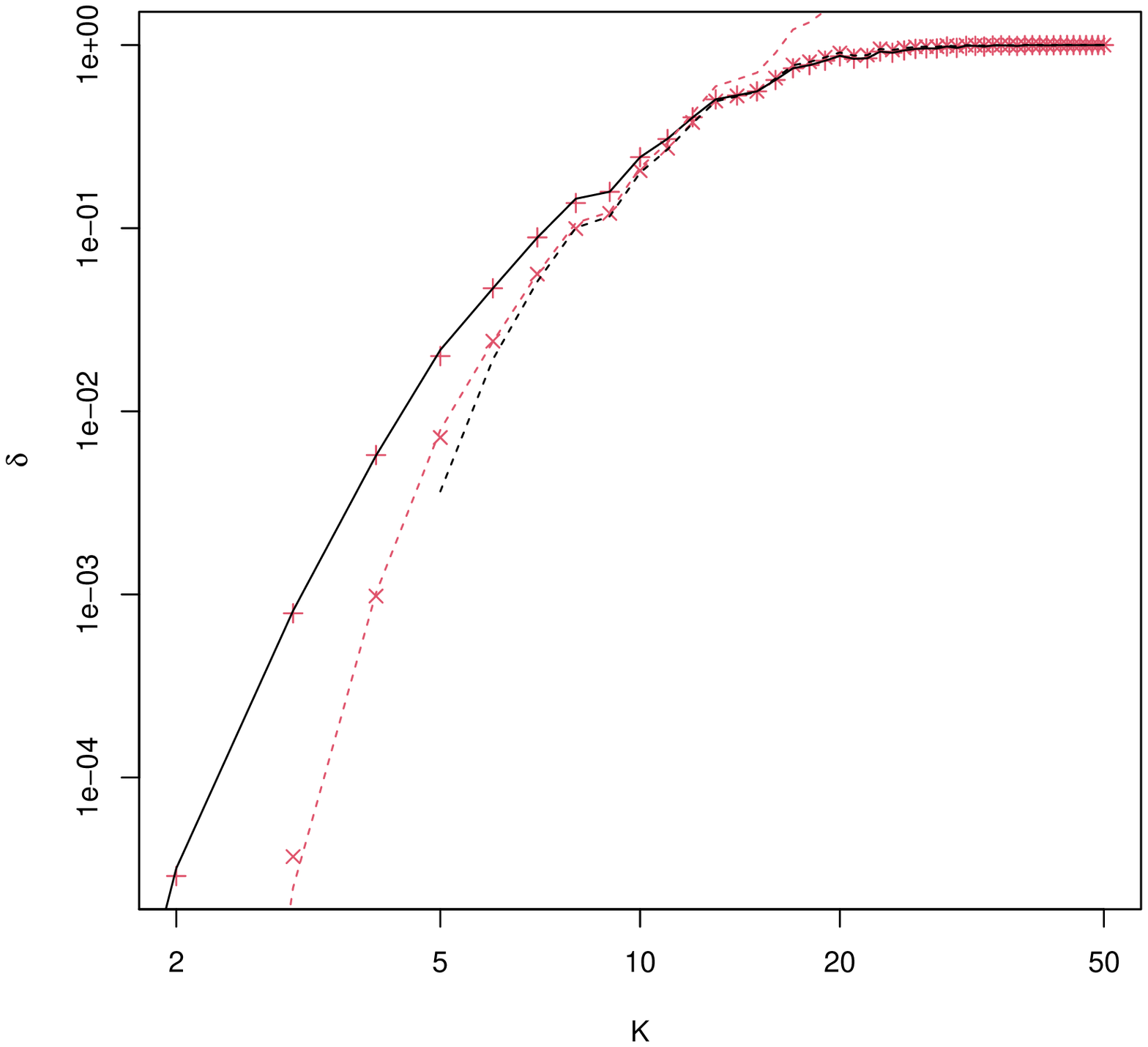}}
\put(115,-0){\includegraphics[height=100\unitlength,width=100\unitlength]{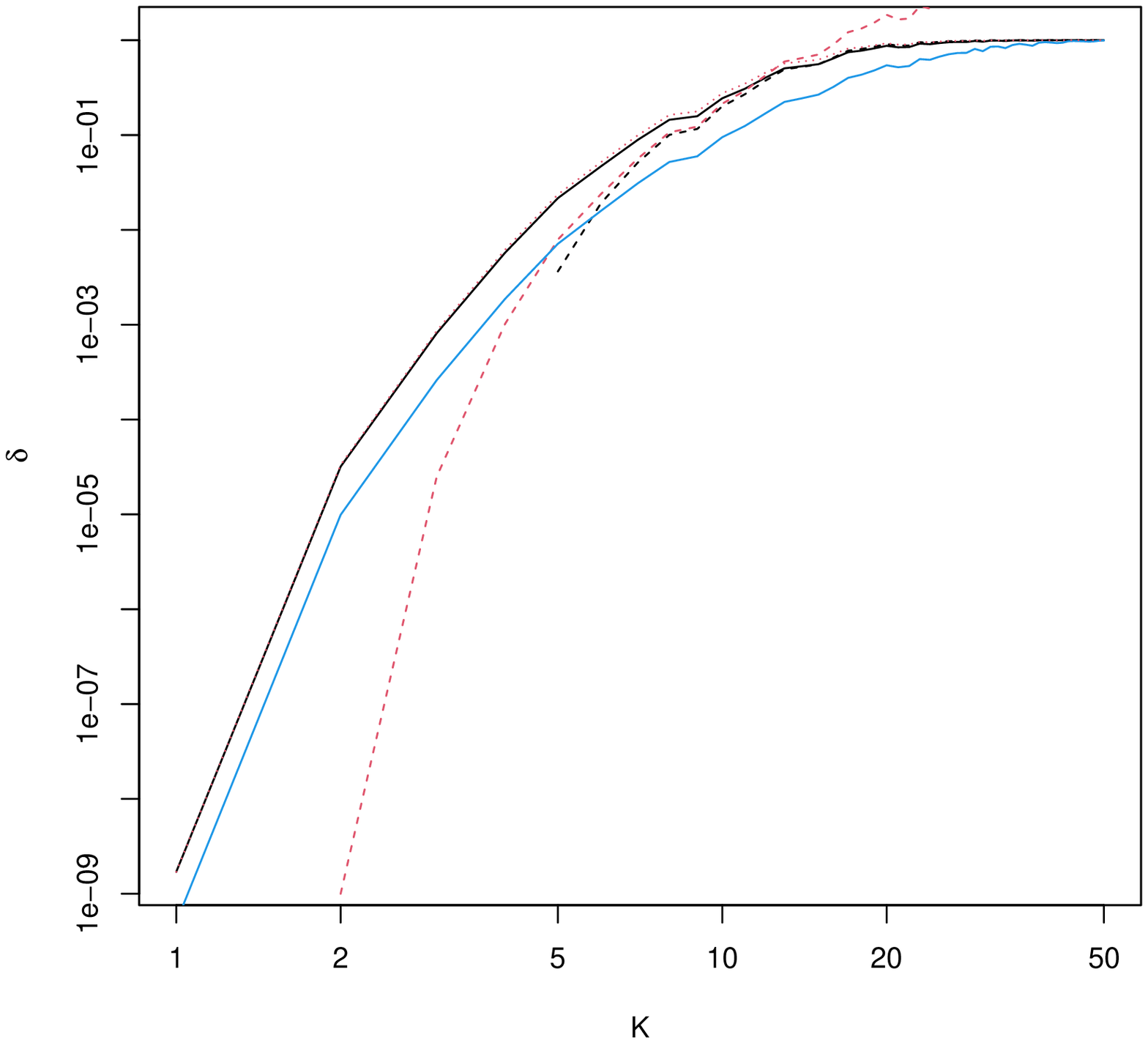}}
\end{picture}
\caption{The probability of failure, $\delta$,  as a function of the number of committees, $K$, computed for the parameters $A=1/3$ (fraction of a committee) and $P=1/4$ (fraction of adversarial or Byzantine nodes).   
Left: The red  $+$ and $\times$ symbols are results of simulations, obtained for $N=10^3$, where $N=nK+r$, nodes over   $10^6$ random samples, which assume the  $N\times P$  adversarial nodes, respectively, on \emph{average} and \emph{exactly}. 
The black solid line, going through the $+$ symbols,  connects exact values  computed numerically from  the equation (\ref{eq:delta-binom}). The black  dashed line corresponds to the asymptotic result (\ref{eq:SP-Prob-N-large-Q}) for (\ref{eq:delta-hyper}). The red dashed line  corresponds to  the  upper bound (\ref{eq:hyper-U-b-exact}). Right: The red dotted and blue solid lines correspond to, respectively, the upper and lower bounds in (\ref{eq:lb-ub}). The upper bound in the latter was computed using (\ref{eq:binom-tail-ub-2}).} 
\label{fig:delta-N-1000}
\end{figure}
\begin{figure}[t]
\setlength{\unitlength}{0.57mm}
\begin{picture}(230,100)
\put(0,0){\includegraphics[height=100\unitlength,width=100\unitlength]{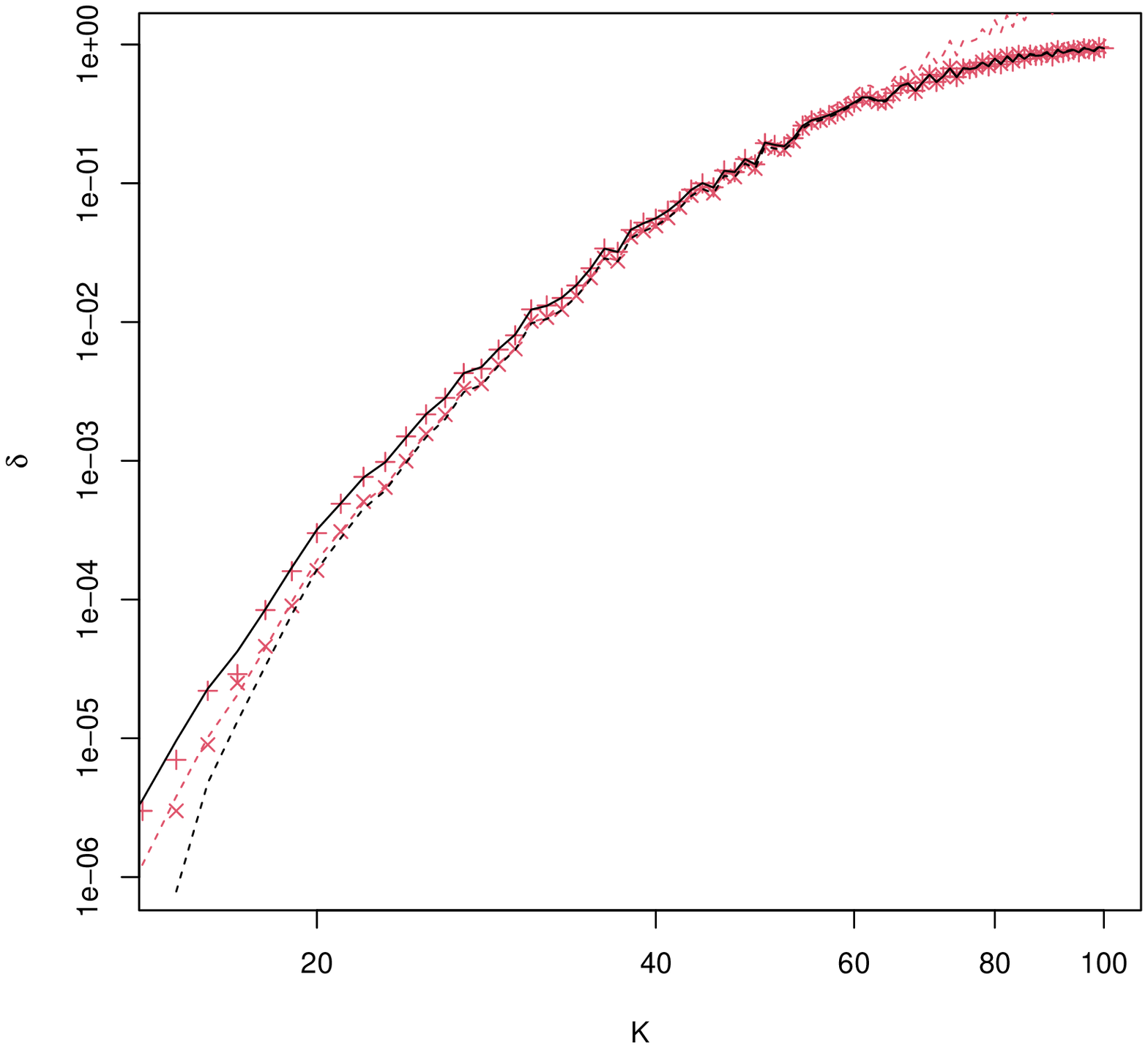}}
\put(115,-0){\includegraphics[height=100\unitlength,width=100\unitlength]{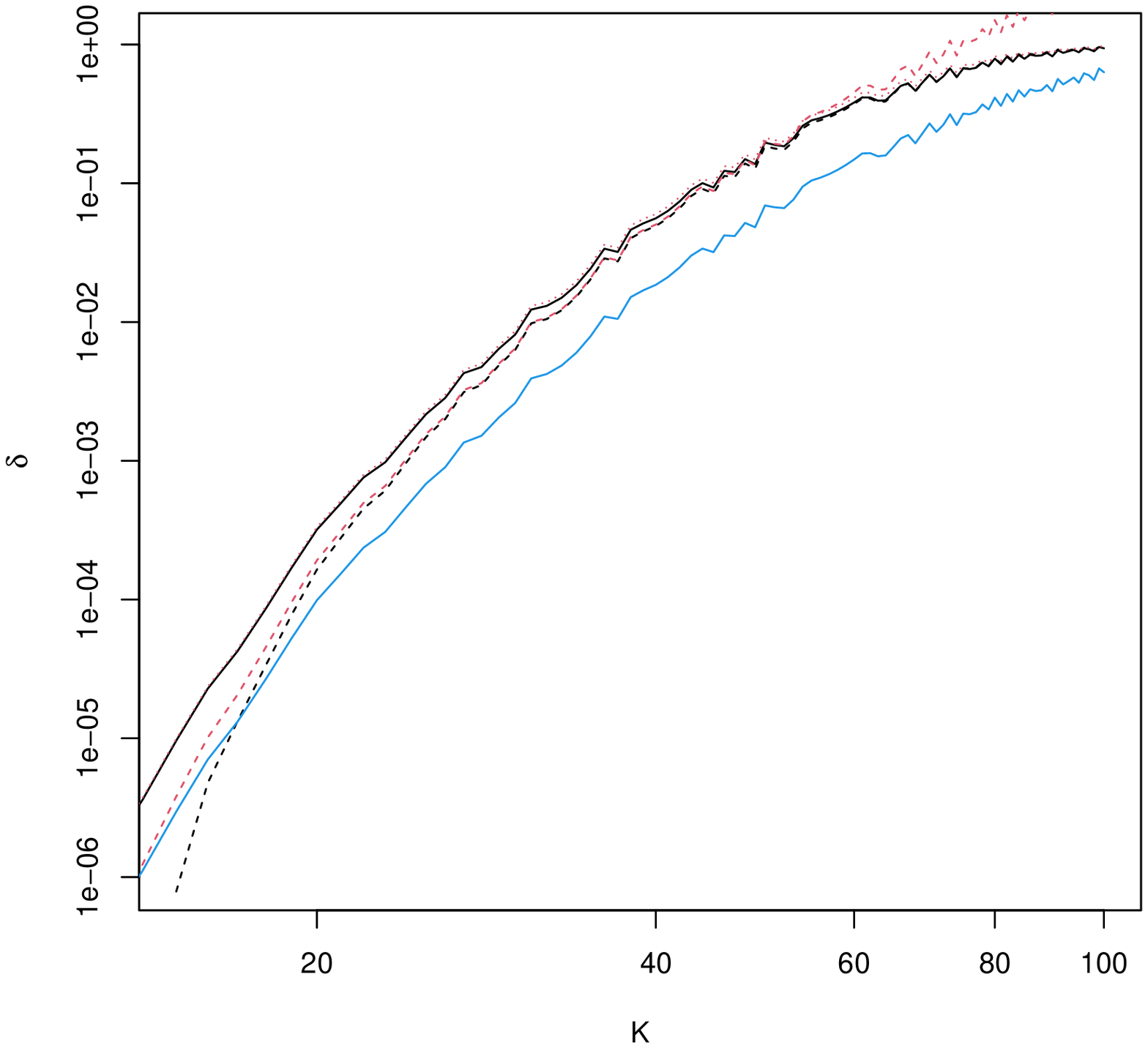}}
\end{picture}
\caption{The probability of failure, $\delta$,  as a function of the number of committees, $K$, computed for the parameters $A=1/3$ (fraction of a committee) and $P=1/4$ (fraction of adversarial or Byzantine nodes).   
Left: The red  $+$ and $\times$ symbols are results of simulations, obtained for $N=10^4$, where $N=nK+r$, nodes over   $10^6$ random samples, which assume the  $N\times P$  adversarial nodes, respectively, on \emph{average} and \emph{exactly}. 
The black solid line, going through the $+$ symbols,  connects exact values  computed numerically from  the equation (\ref{eq:delta-binom}). 
The black  dashed line corresponds to the asymptotic result (\ref{eq:SP-Prob-N-large-Q}) for (\ref{eq:delta-hyper}).The red dashed line  corresponds to  the  upper bound (\ref{eq:hyper-U-b-exact}). Right: 
The red  dotted  and blue solid lines correspond to, respectively, the upper and lower bounds in (\ref{eq:lb-ub}). The upper bound in the latter was computed using (\ref{eq:binom-tail-ub-2}). 
 } 
\label{fig:delta-N-10000}
\end{figure}

 \new{Second}, we test our analytic results  against  numerical experiments and we find, up to statistical variability and numerical accuracy, a good agreement  as can be seen in Figures \ref{fig:delta-N-1000} and \ref{fig:delta-N-10000}.   We also observe   that  the  probability   of failure (\ref{eq:delta-hyper})  is   bounded above by the probability (\ref{eq:delta-binom}). This is expected as  in the latter the number of adversarial nodes is a random number 
and in the former, this number is  fixed.  However the difference between the probabilities  (\ref{eq:delta-binom}) and (\ref{eq:delta-hyper})
is vanishing (see Figures \ref{fig:delta-N-1000}-\ref{fig:delta-N-10000-2}) when  both  probabilities approach unity with increasing $K$.

Finally, we note that insights drawn  from  the above results can be used to design an algorithm  which for a given number of nodes $N$  finds the \emph{maximum} the number of committees $K$ such that the probability of failure is less than (or equal) to  a given $\delta$. The latter will reduce the computational complexity in the  sharded blockchain, which is proportional to the size of a committee $n$, 
 without  compromising its safety~\cite{Hafid2020survey}. The possibly simplest way to implement such an algorithm is to use  the probability  of failure (\ref{eq:delta-binom}) and  $N=nK+r$ nodes with $n$ and $n+1$ nodes assigned, respectively, to the  $K-r$ and $r$ committees. The latter ensures that  for any $N$ the  optimization problem is essentially one-dimensional, where this  dimension is  $K$, and the former allows  the computation of a ``current'' probability of failure, which is  an elementary step  of  this optimization,  to be efficient.  We note that here we  assume that there are $N\times P$ adversarial nodes on average and using   this algorithm when there are  exactly $N\times P$ of adversarial   nodes  would give us  a larger committee size. However, the difference in committee sizes  between the latter and the former becomes  negligible   as $N$ is increasing as can be seen in  Figure \ref{fig:n-delta-1/1000}. The description and  pseudocode of one  possible variant for  such an algorithm are  provided in Appendix \ref{appendix:alg}.

\begin{figure}[t]
\setlength{\unitlength}{0.57mm}
\begin{picture}(230,100)
\put(0,0){\includegraphics[height=100\unitlength,width=100\unitlength]{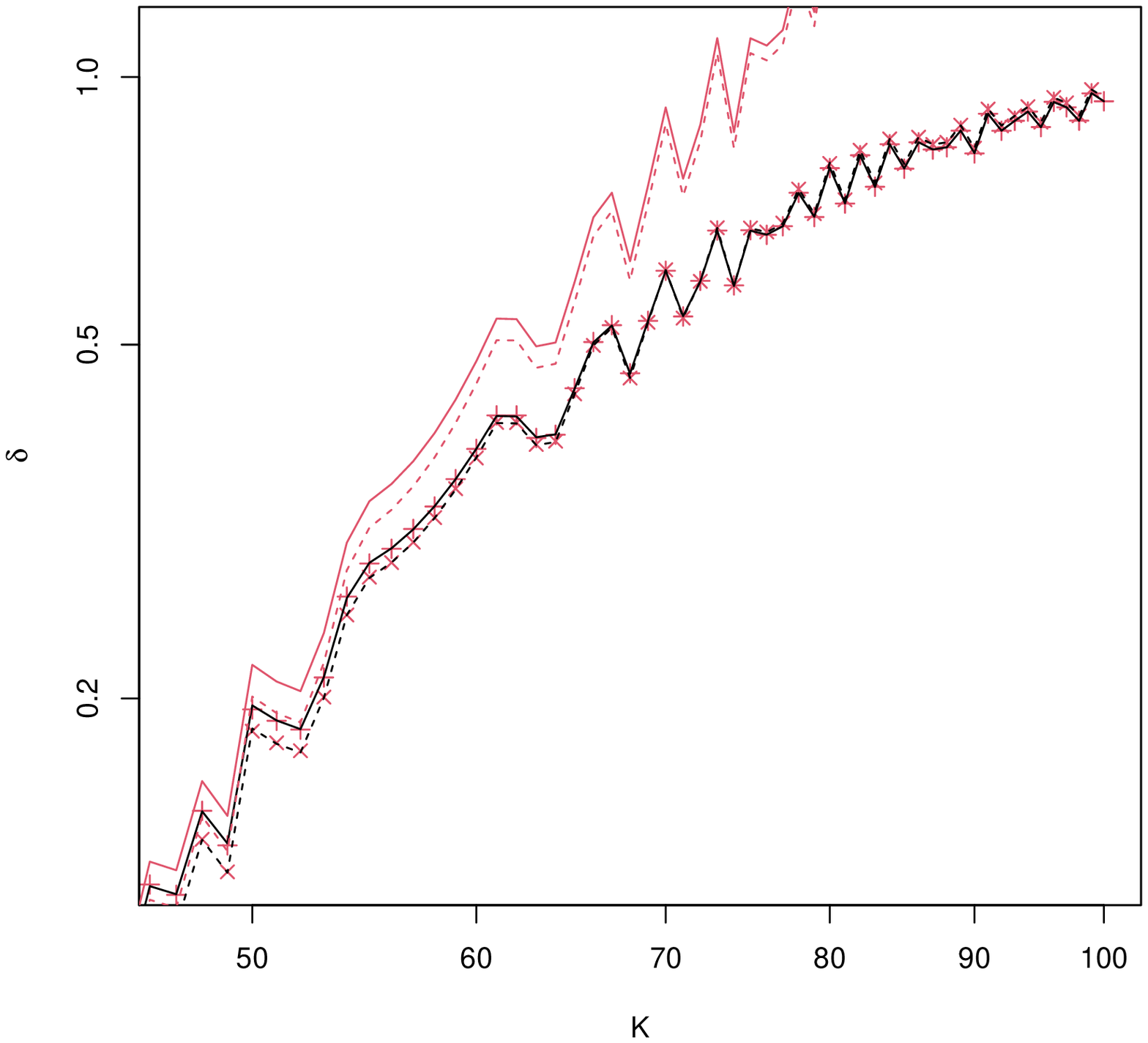}}
\put(115,-0){\includegraphics[height=100\unitlength,width=100\unitlength]{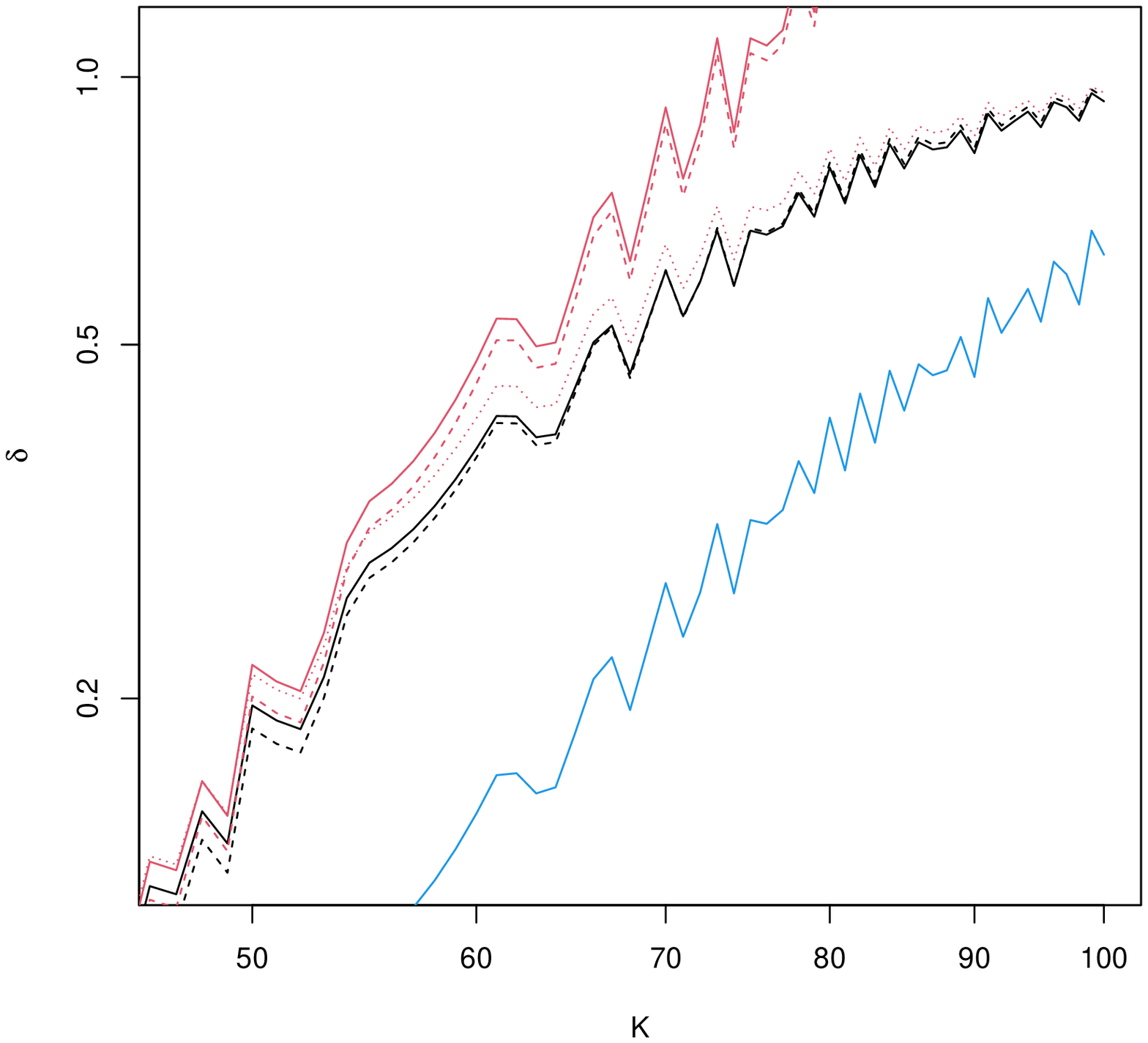}}
\end{picture}
\caption{Comparing analytic results of this work with typical  bounds from the  literature. Here the same data  as in the Figure \ref{fig:delta-N-10000} is used.
Left: The red  $+$ and $\times$ symbols are results of simulations
The black solid line, going through the $+$ symbols,  connects exact values  computed numerically from  the equation (\ref{eq:delta-binom}).
The black dashed line corresponds to the asymptotic result (\ref{eq:SP-Prob-N-large-Q}) for (\ref{eq:delta-hyper}). The red dashed line corresponds to  the union bound~\cite{Zamani2018}, i.e. the equation   (\ref{eq:hyper-U-b-exact}), which uses  hypergeometric distribution. The red solid line corresponds to the union bound which uses   binomial distribution~\cite{Kokoris2018} Right: The red  dotted  and blue solid lines correspond to, respectively, the upper and lower bounds in (\ref{eq:lb-ub}). The upper bound in the latter was computed using (\ref{eq:binom-tail-ub-2}).  } 
\label{fig:delta-N-10000-2}
\end{figure}
%
%

\section{Conclusion and Future work}
In this work, we have studied random partitions of networks. We introduced a very general probabilistic framework  and used this framework  to construct  probability distributions of  random partitions.  We showed that these distributions admit exact and explicit expressions which facilitated a study of the probability of failure. For the latter, we obtained exact expressions, bounds, and  asymptotic results.  For the  sharding of blockchains, these results offer significant  analytic  and algorithmic improvements,  but this work still leaves many questions open.

In particular,  we  have established  that  the probability of failure (\ref{eq:delta-binom}) is  an  upper bound on  the probability  (\ref{eq:delta-hyper}) but this  was done numerically and we envisage that to show this  in a more rigorous way is also possible. The other question is the relation between the asymptotic result (\ref{eq:SP-Prob-N-large-Q})  for the probability of failure (\ref{eq:delta-hyper}) and the upper bound (\ref{eq:hyper-U-b-exact}). Here, it is surprising that this bound is very accurate  when the   number of committees $K$  is small but for larger $K$  it also approaches the asymptotic result which has quite a different functional form. We note that the bound (\ref{eq:hyper-U-b-exact}) uses univariate hypergeometric distribution which can be difficult to compute numerically and hence for practical applications  a  simpler bound,  but  much tighter than (\ref{eq:hyper-U-b}), is more desirable.

\begin{acks}
The authors would like to thank the Nomos team for their invaluable assistance and very enlightening discussions that greatly enriched this work. 
\end{acks}


\appendix

\section{Description  of  the Algorithm\label{appendix:alg}}
Here we describe the algorithm which given the number of nodes $N$, assumed fraction of adversarial nodes  $P$, a fraction of a committee $A$ and the probability of failure $\delta$ computes the maximum number of committees $K$ and corresponding committee sizes. The algorithm computes  the RHS of  (\ref{eq:delta-binom}) for the  
   $N=nK+r$ nodes where  the  $n$ and $n+1$ nodes are assigned, respectively, to the  $K-r$ and $r$ committees. Initially, all $N$ nodes are in one committee, and in subsequent iterations, the number of committees $K$ is increased by one until the probability  (\ref{eq:delta-binom}) is equal to  or less than $\delta$. The pseudocode for the latter is provided below.  
   
\SetKwComment{Comment}{/* }{ */}
\RestyleAlgo{ruled}
\SetKwInOut{Input}{input}
\SetKwInOut{Output}{output}
\begin{algorithm}
\caption{The algorithm to compute minimal sizes of committees resilient to $\delta$ failure rate.}\label{alg:comm}
\Input{$N$, $\delta$, $A$, $P$;}
\Output{$K$, $n$, $r$, $\mathrm{Prob}$;}
$K \gets 1$\;
$n \gets N$\;
$r \gets 0$\;
$\mathrm{Prob} \gets 0$\;
\Repeat{$\mathrm{Prob}>\delta$}{
\tcc{Save $K$, $n$, $r$ and $\mathrm{Prob}$.}
$K_{-1}\gets K$\;
$n_{-1}\gets n$\;
$r_{-1}\gets r$\;
$Prob_{-1}\gets Prob$\;
\tcc{Compute next $K$.}
$K\gets K+1$\;
\tcc{Compute remainder, r, and quotient, n when N is divided by K.}
$r\gets rem(N,K)$\;
$n\gets quot(N,K)$\;
\eIf{$r>0$}{
  \tcc{Compute CDF of the Binomial(n,P) and Binomial(n+1,P).}
$\mathrm{Prob}_0\gets\Prob(X\leq \lfloor An\rfloor\,\vert n,P)$\; 
$\mathrm{Prob}_1\gets\Prob(X\leq \lfloor A(n+1)\rfloor\,\vert n+1,P)$\; 
\tcc{Compute the probability of failure.}
$Prob\gets 1-\mathrm{Prob}_0^{K-r}\mathrm{Prob}_1^r$  
  }{
  \tcc{Compute CDF of the Binomial(n,P).}
$\mathrm{Prob}_0\gets\Prob(X\leq \lfloor An\rfloor\,\vert n,P)$\;  
\tcc{Compute the probability of failure.}
$Prob\gets 1-\mathrm{Prob}_0^{K}$
    } 
}
$K \gets K_{-1}$\;
$n \gets n_{-1}$\;
$r \gets r_{-1}$\;
$Prob\gets \mathrm{Prob}_{-1}$\;
\end{algorithm}

\end{document}